\begin{document}

\newtheorem{theorem}{Theorem}[section]
\newtheorem{lemma}{Lemma}[section]
\newtheorem{corollary}{Corollary}[section]
\newtheorem{claim}{Claim}[section]
\newtheorem{proposition}{Proposition}[section]
\newtheorem{definition}{Definition}[section]
\newtheorem{fact}{Fact}[section]
\newtheorem{example}{Example}[section]

\newcommand{\cA}{{\cal A}}
\newcommand{\cP}{{\cal P}}
\newcommand{\cC}{{\cal C}}
\newcommand{\cG}{{\cal G}}
\newcommand{\cN}{{\cal N}}
\newcommand{\cU}{{\cal U}}
\newcommand{\cT}{{\cal T}}
\newcommand{\cS}{{\cal S}}
\newcommand{\cL}{{\cal L}}
\newcommand{\cV}{{\cal V}}
\newcommand{\loc}{{\cal LOCAL}}
\newcommand{\cM}{{\cal M}}

\newcommand{\f}[4][0pt]{%
\begin{list}{#2}{%
 \setlength{\leftmargin}{#3 em}\addtolength{\leftmargin}{#3 em}\addtolength{\leftmargin}{1 em}%
 \setlength{\labelsep}{#3 em}\addtolength{\labelsep}{#3 em}
 \setlength{\labelwidth}{20pt}
 \if!#1!\else\addtolength{\leftmargin}{#1}\fi%
 \setlength{\topsep}{2pt}
 \setlength{\partopsep}{0pt}%
 \if!#1!\else\setlength{\itemindent}{-#1}\fi%
}%
   \item  #4%
\end{list}%
}

\newcommand{\qed}{\hfill $\square$ \smallbreak}
\newenvironment{proof}[1][Proof]
{\par\noindent{\bf #1:} }{\hspace*{\fill}\nolinebreak{$\Box$}\bigskip\par}

\newcommand{\view}{\mathcal{V}}
\newcommand{\agent}{\lambda}
\newcommand{\lab}{\alpha}
\newcommand{\labels}{\mathcal{L}}
\newcommand{\lista}{\mathcal{Q}}
\newcommand{\code}{\xi}
\newcommand{\home}{h}
\newcommand{\routebegin}{b}
\newcommand{\routeend}{d}
\newcommand{\cR}{\mathcal{R}}
\newcommand{\hist}{\mathcal{H}}
\newcommand{\cQ}{\mathcal{Q}}

\newcommand{\ints}{\mathbb{N}}
\newcommand{\algorithmCL}{{\tt Choose\textup{-}Leader}}
\newcommand{\algorithmUL}{{\tt Update\textup{-}Label}}
\newcommand{\algorithmLE}{{\tt Leader\textup{-}Election}}
\newcommand{\algorithmInit}{{\tt Initialization}}

\newcommand{\last}{\textup{last}}
\newcommand{\depth}{3(n-1)}
\newcommand{\len}{\ell}
\newcommand{\conditionEC}{\textup{EC}}

\title{{\bf  How to Meet Asynchronously at Polynomial Cost}\thanks{A preliminary version of this paper appeared in
Proc. 32nd Annual ACM Symposium on Principles of Distributed Computing (PODC 2013).} }

\author{
Yoann Dieudonn\'{e}\thanks{
MIS, Universit\'{e} de Picardie  Jules  Verne Amiens,  France. E-mail:  {\tt yoann.dieudonne@u-picardie.fr}}
 \and Andrzej Pelc\thanks{
 D\'epartement d'informatique, Universit\'e du Qu\'ebec en Outaouais, Gatineau,
Qu\'ebec J8X 3X7, Canada. {\tt pelc@uqo.ca}. Partially supported by NSERC discovery grant 
and by the Research Chair in Distributed Computing at the
Universit\'e du Qu\'ebec en Outaouais.}
\and  Vincent Villain\thanks{
MIS, Universit\'{e} de Picardie  Jules  Verne Amiens,  France. E-mail: {\tt vincent.villain@u-picardie.fr}}
}

\maketitle

\thispagestyle{empty}

\begin{abstract}

Two mobile agents starting at different nodes of an unknown network have to meet.
This task is known in the literature as {\em rendezvous}.
Each agent has a different label which is a positive integer known to it, but unknown to the other agent.
Agents move in an asynchronous way: the speed
of agents may vary and is controlled by an adversary.  
The cost of a rendezvous algorithm is the total number of edge traversals by both agents until their meeting.
The only previous deterministic algorithm solving this problem has cost exponential in the size of the graph and in the larger label.
In this paper we present a deterministic rendezvous algorithm with cost {\em polynomial} in the size of the graph and in the {\em length}
of the {\em smaller} label. Hence we decrease the cost exponentially in the size of the graph and doubly exponentially in the labels of agents.

As an application of our rendezvous algorithm we solve several fundamental problems involving teams of unknown size larger than 1 of labeled agents
moving asynchronously in unknown networks. Among them are the following problems: {\tt team size}, in which every agent has to find the total
number of agents, {\tt leader election}, in which all agents have to output the label of a single agent, {\tt perfect renaming} in which all agents
have to adopt new different labels from the set $\{1,\dots,k\}$, where $k$ is the number of agents, and {\tt gossiping}, in which each agent has initially a piece of information (value) and all agents have to output all the values. Using our rendezvous algorithm we solve all these problems at cost polynomial in the size of the graph and in the smallest length of all labels of participating agents. 

\vspace*{1cm}

{\bf keywords:} asynchronous mobile agents, network, rendezvous, deterministic algorithm, leader election, renaming, gossiping

\vspace*{3cm}
\end{abstract}

\pagebreak
\section{Introduction} \label{sec:intro}

\noindent
{\bf The background.}
Two mobile agents, starting at different nodes of a network, possibly at different times, have to meet.
This basic task,  known  as {\em rendezvous}, has been thoroughly studied in the literature.
It even has applications in human and animal interaction, e.g., when agents are people that have to meet in a city whose streets form a network,
or migratory birds have to gather at one destination flying in from different places.
In computer science applications,
mobile agents usually represent software agents in computer networks, or mobile robots, if the network is a labyrinth or is composed of corridors in a building.
The reason to meet may be to exchange data previously collected by the agents,
or to coordinate some future task, such as network maintenance or finding a map of the network.

In this paper we consider the rendezvous problem under a very weak scenario which assumes little knowledge and control power of the agents.
This makes our solutions more widely applicable, but significantly increases the difficulty of meeting. More specifically, agents do not have any a priori information about the network, they do not know its topology or any bounds on parameters such as the diameter or the size. 
We seek rendezvous algorithms that do not
rely on the knowledge of node labels, and can work in anonymous networks as well  (cf. \cite{alpern02b}). 
The importance of designing such algorithms
is motivated by the fact that, even when nodes are equipped with distinct labels, agents may be unable to perceive them
because of limited sensory capabilities, 
or nodes may refuse to reveal their labels, e.g., due to security or privacy reasons.
Note that if nodes had distinct labels that can be perceived by the agents, then agents might explore the network and meet in the smallest node, hence rendezvous would reduce to exploration.
Agents have distinct labels, which are positive integers and each agent knows its own label, but not the label of the other agent. 
The label of the agent is the only a priori initial input to its algorithm. During navigation agents gain knowledge of the visited part of the network:
when an agent enters a node, it learns the port number by which it enters and the degree of the node.
The main difficulty of the scenario is the asynchronous way in which agents move: the speed of the agents may vary, may be different for each of them, and is totally controlled by an adversary.  This feature of the model is also what makes it more realistic than the synchronous scenario: in practical applications the speed of agents depends on various factors
that are beyond their control, such as congestion in different parts of the network or mechanical characteristics in the case of mobile robots.
Notice that in the asynchronous scenario we cannot require that agents meet in a node: the adversary can prevent this even in the two-node graph.
Thus, similarly as in previous papers on asynchronous rendezvous \cite{BCGIL,CCGL,CLP,DGKKPV,GP}, we allow the meeting either in
a node or inside an edge. 
The cost of a rendezvous algorithm is the total number of edge traversals by both agents until their
meeting.

\noindent
{\bf Our results.} The main result of this paper is a deterministic rendezvous algorithm, working in arbitrary unknown networks and whose cost is polynomial in the size of the network and in the {\em length} of the {\em smaller} label (i.e. in the logarithm of this label).
The only previous algorithm solving the asynchronous rendezvous problem \cite{CLP} is exponential in the size of the network and in the larger label.
Hence we decrease the cost exponentially in the size of the network and doubly exponentially in the labels of agents.

As an application of our rendezvous algorithm we solve several fundamental problems involving teams of unknown size larger than 1 of labeled agents
moving asynchronously in unknown networks. Among them are the following problems: {\tt team size}, in which every agent has to find the total
number of agents, {\tt leader election}, in which all agents have to output the label of a single agent, {\tt perfect renaming} in which all agents
have to adopt new different labels from the set $\{1,\dots,k\}$, where $k$ is the number of agents, and {\tt gossiping}, in which each agent has initially a piece of information (value) and all agents have to output all the values. Using our rendezvous algorithm we solve all these problems at cost
(total number of edge traversals by all agents) polynomial in the size of the graph and in the smallest length of all labels of participating agents.
To the best of our knowledge this is the first solution of these problems for asynchronous mobile agents, even regardless of the cost.

\noindent
{\bf The model.}
The network is modeled as a finite simple undirected connected graph (without self-loops or multiple edges), referred to hereafter as a graph. 
Nodes are unlabeled, but
edges incident to a node $v$ have distinct labels in 
$\{0,\dots,d-1\}$, where $d$ is the degree of $v$. Thus every undirected
edge $\{u,v\}$ has two labels, which are called its {\em port numbers} at $u$
and at $v$. Port numbering is {\em local}, i.e., there is no relation between
port numbers at $u$ and at $v$. Note that in the absence of port numbers, edges incident to a node
would be undistinguishable for agents and thus gathering would be often impossible, 
as the adversary could prevent an agent from taking some edge incident to the current node. 
{By $succ(v,i)$ we denote the neighbor of $v$ linked to it by the edge with port number $i$ at $v$.}

In order to avoid crossings of non-incident edges, we consider an embedding of the underlying graph in the
three-dimensional Euclidean space, with nodes of the graph being points of the
space and edges being
pairwise disjoint line segments joining them.
 Agents are modeled
as points moving inside this embedding. (This embedding is only for the clarity of presentation; in fact crossings of non-incident edges
would make rendezvous simpler, as agents traversing distinct edges could sometimes meet accidentally at the crossing point.)

There are two agents that start from different nodes of the graph  and  traverse its edges.
They cannot mark visited nodes or traversed edges in any way.
Agents have distinct labels which are strictly positive integers. Each agent knows only its own label which is an initial input
to its deterministic algorithm.
Agents do not  know the topology of the graph or any bound on its size. They can, however, acquire knowledge about the network:
When an agent enters a node, it learns its degree and the port of entry. We assume that the memory of the agents is unbounded: from the computational point of view they are modeled as 
Turing machines. 

Agents navigate in the graph in an asynchronous way which is formalized by an adversarial model used in 
\cite{BCGIL,CCGL,CLP,DGKKPV,GP} and described below.
Two important notions used to specify movements of agents are the {\em route} of the agent and its {\em walk}.
Intuitively, the agent chooses the route {\em where} it moves and the adversary describes the walk on this 
route, deciding {\em how} the agent  moves. More precisely,  these notions are defined as follows.
The adversary initially places an agent at some node of the graph.
The route is chosen by the agent and is defined as follows. 
The agent chooses one of the available ports at the current node. 
After getting to the other end of the corresponding edge, the agent chooses one of the available ports at this node
or decides to stay at this node. It does so on the basis of all information currently available to it.
The resulting route of the agent is the corresponding sequence of edges $(\{v_0,v_1\},\{v_1,v_2\},\dots)$,
which is a (not necessarily simple) path in the graph. 

We now describe the walk $f$ of an agent on its route. Let $R=(e_1,e_2,\dots)$ be the route of an agent. Let $e_i=\{v_{i-1},v_i\}$.
Let $(t_0,t_1,t_2,\dots)$, where $t_0=0$, be an increasing sequence of reals, chosen by the adversary, 
that represent points in time. Let $f_i:[t_i,t_{i+1}]\rightarrow [v_i,v_{i+1}]$ be any continuous function, chosen by the adversary, such that $f_i(t_i)=v_i$ and $f_i(t_{i+1})=v_{i+1}$. For any $t\in [t_i,t_{i+1}]$, we define $f(t)=f_i(t)$. 
The interpretation of the walk $f$ is as follows: at time $t$ the agent
is at the point $f(t)$ of its route.  This general definition of the walk and the fact that (as opposed to the route) it is designed by the adversary,
are a way to formalize the asynchronous characteristics of the process.  The movement of the agent can be
at arbitrary speed, the adversary may sometimes stop the agent or move it back and forth, as long as the walk 
in each edge of the route is continuous and covers all of it.
This definition makes the adversary very powerful,
and consequently agents have little control on how they move. This makes a meeting between agents hard to achieve.
Agents with routes $R_1$ and $R_2$ and with walks $f_1$ and $f_2$ meet at time $t$,
if points $f_1(t )$ and $f_2(t )$ are identical. 
A meeting is guaranteed for routes $R_1$ and $R_2$,
if the agents using these routes meet at some time $t$, regardless of the walks chosen by the adversary.

\noindent{\bf Related work.}
In most papers on rendezvous a synchronous scenario was assumed, in which agents navigate in the graph in synchronous rounds.
An extensive survey of  randomized rendezvous in various scenarios can be found in
\cite{alpern02b}, cf. also  \cite{alpern95a,alpern02a,alpern99,anderson90}. 
Deterministic rendezvous in networks has been surveyed in \cite{Pe}.
Several authors
considered the geometric scenario (rendezvous in an interval of the real line, see, e.g.,  \cite{baston01},
or in the plane, see, e.g., \cite{anderson98a,anderson98b}).
Rendezvous of  more than two agents, often called gathering, has been studied, e.g., 
in \cite{DiPe,DPP,lim96,YY}. In \cite{DiPe} agents were anonymous, while in~\cite{YY} the authors considered 
gathering many agents with unique labels. Gathering many labeled agents in the presence of Byzantine agents was studied in \cite{DPP}. 
The problem was also studied in the context of multiple robot systems, cf.
\cite{CP05,fpsw}, and fault tolerant gathering of robots in the plane was studied, e.g., in \cite{AP06,CP08}. 

For the deterministic setting a lot of effort has been dedicated to the study of the feasibility of rendezvous, and to the time required to achieve this task, when feasible. For instance, deterministic rendezvous with agents equipped with tokens used to mark nodes was considered, e.g., in~\cite{KKSS}. Deterministic rendezvous of two agents that cannot mark nodes but have unique labels was discussed in {\cite{DFKP,KM,TSZ14}}.
These papers are concerned with the time of synchronous rendezvous in arbitrary
graphs. In \cite{DFKP} the authors show a rendezvous algorithm polynomial in the size of the graph, in the length of the shorter
label and in the delay between the starting time of the agents. In {\cite{KM,TSZ14}} rendezvous time is polynomial in the first two of these parameters and independent of the delay.

Memory required by two anonymous agents to achieve deterministic rendezvous has been studied in \cite{FP2} for trees and in  \cite{CKP} for general graphs.
Memory needed for randomized rendezvous in the ring is discussed, e.g., {in~\cite{KKPM11}}.

Asynchronous rendezvous of two agents in a network has been studied in \cite{BCGIL,CCGL,CLP,DGKKPV,GP}. The model used in the present paper has been
introduced in \cite{DGKKPV}. In this paper  the authors investigated the cost of rendezvous in the infinite line and in the ring. They also proposed a rendezvous
algorithm for an arbitrary graph with a known upper bound on the size of the graph. This assumption was subsequently removed in \cite{CLP}, but both in
 \cite{DGKKPV} and in  \cite{CLP} the cost of rendezvous was exponential in the size of the graph and in the larger label. In \cite{GP} asynchronous rendezvous was studied for anonymous agents and the cost was again exponential. The only asynchronous rendezvous algorithms at polynomial cost were
 presented in \cite{BCGIL,CCGL}, but in these papers authors restricted attention to infinite multidimensional grids and they used 
 the powerful assumption that each agent knows its starting coordinates. (The cost in this case is polynomial in the initial distance).
 
 A different asynchronous scenario was studied in \cite{CFPS,fpsw} for the plane. In these papers the authors assumed that agents are memoryless, but they can observe the environment and make navigation decisions based on these observations.
 
 The four problems that we solve in the context of asynchronous mobile agents as an application of our rendezvous algorithm, are
 widely researched tasks in distributed computing, under many scenarios. Counting the number of agents is a basic task, cf. \cite{FP3},
 as many mobile agents algorithms depend on this knowledge.  Leader election, cf. \cite{Ly},  is a fundamental problem in distributed
 computing. Renaming  was introduced in \cite{ABDKPR} and further studied by many authors. Gossiping, also called all-to-all communication,
 is one of the basic primitives in network algorithms, cf. \cite{FL}.

\section{Preliminaries}\label{prelim}

Throughout the paper, 
the number of nodes of a graph is called its size.
In this section we {present two procedures}, that will be used as building blocks in our algorithms. 
The aim of both of them is graph exploration, i.e., visiting all nodes and traversing all edges of the graph by a single agent. 
The first procedure, based on universal exploration sequences (UXS), is a corollary of the  result of Reingold \cite{Re}. Given any positive integer $n$, it allows the agent to traverse all edges of any graph of size at most $n$,
starting from any node of this graph, using $P(n)$ edge traversals, where $P$ is some polynomial. (The original procedure of Reingold only visits all nodes, but it can be transformed to traverse all edges by visiting all neighbors of each visited node before going to the next node.) After entering a node of degree $d$ by some port $p$,
the agent can compute the port $q$ by which it has to exit; more precisely $q=(p+x_i)\mod d$, where $x_i$ is the corresponding term of the UXS.

A {\em trajectory} is a sequence of nodes of a graph, in which each node is adjacent to the preceding one. (Hence it is a sequence of nodes visited following a route.)
Given any starting node $v$,  we denote by $R(n,v)$ the trajectory obtained by Reingold's procedure. The procedure can be applied in any graph starting at any node, giving
some trajectory. We say that  the agent {\em follows} a trajectory if it executes the above procedure used to construct it.
This trajectory will be called {\em integral}, if the corresponding route covers all edges of the graph. By definition, the trajectory $R(n,v)$ is integral if it is
obtained by Reingold's procedure applied in any graph of size at most $n$ starting at any node~$v$. 

{The second procedure, derived from \cite{DP} and adapted here to our needs, allows an agent to traverse all edges {and visit all nodes} of any graph of size at most $n$ provided that there is a {unique token located on an {\em extended edge} $u-v$ (for some adjacent nodes $u$ and $v$) and authorized to move arbitrarily on it.
An extended edge is defined as the edge $u-v$ augmented by nodes $u$ and $v$}. (It is well known that a terminating exploration even of all anonymous rings of unknown size by a single agent without a token is impossible.) 
In our applications the roles of the token and of the exploring agent will be played by agents. We call this procedure $ESST$, for {\em exploration with a semi-stationary token}, {as the token always remains on the same extended edge (even if it can move arbitrarily inside it)}. We first describe the procedure before showing its validity as well as its polynomial complexity with respect to the size of the graph.}

{
The following notion will be crucial for our considerations. Let $m$ be a positive integer.  An application of $R(2m,u)$ to a graph $G$ at some node $u$ is called {\em clean}, if all nodes in this application are of degree at most $m-1$.}

\vspace*{0.2cm}
\noindent
{{\bf Procedure ESST}}

{The algorithm proceeds in {phases $i=3,6,9,12\dots$.} In any phase $i$, the agent first applies 
$R(2i,v)$
at the node $v$ in which it started this phase.
Let $(u_1,\dots , u_{r+1})$ be the trajectory $R(2i,v)$ ($v=u_1$ and $r=P(2i)$). Call this trajectory the trunc of this phase.
If it is not clean, or if no token is seen, the agent aborts phase $i$ and starts phase {$i+3$}. Otherwise, the agent backtracks to $u_1$, and applies 
$R(i,u_j)$
at each node $u_j$ of the trunc, interrupting a given execution of $R(i,u_j)$ when it sees a token, every time recording the {\em code} of the path from $u_j$ to this token. 
This code is defined as the sequence of ports
encountered while walking along the path.
(If, for some $j$, the token is at $u_j$, then this code is an empty sequence.) After seeing a token, the agent backtracks to $u_j$, goes
to $u_{j+1}$ and starts executing $R(i,u_{j+1})$. For each node $u_j$ of the trunc, if at the end of $R(i,u_j)$ either no token is seen during the execution of $R(i,u_j)$, or the agent has recorded at least {$\frac{i}{3}$} different codes in phase $i$, then
the agent aborts phase $i$ and starts phase {$i+3$} (in the special case where the agent decides to abort phase $i$ while traversing an edge, phase {$i+3$} starts at the end of this edge traversal). Otherwise, upon completion of phase $i$, it stops as soon as it is at a node.}

\vspace*{0.2cm}

{The remaining part of this section is devoted to the proof that Procedure ESST is correct and works at polynomial cost. Again, this result uses  ideas from \cite{DP}: we include it for the sake of completeness.}

\begin{lemma}\label{lemma}
{Let $m\leq n$ be positive integers, and let $G$ be a graph of size $n$. Let $S$ be the trajectory in $G$ resulting from the execution of $R(2m,v)$, for some node $v$ of $G$. If trajectory $S$ is clean, then $S$ contains at least $m$ different nodes.}
\end{lemma}

\begin{proof}
{Let $S=(u_1,\dots , u_{r+1})$ with $v=u_1$ and $r=P(2m)$.
Suppose for contradiction that there are fewer than $m$ different nodes in $S$, and let $X$ be the set of these nodes. Consider any node $x \in X$.
A port $j$ at node $x$ is called {\em occupied}, if for some index $t$, we have $x=u_t$ and either $succ(u_t,j)=u_{t-1}$ or   $succ(u_t,j)=u_{t+1}$.
Otherwise it is called {\em free}.
Let $d$ be the maximum number of free ports at any node of $X$.
Construct the following graph $H$. The set of nodes of $H$ is $X\cup\{y_1,\dots,y_d\}$, where all $y_s$ are distinct and do not belong to $X$. 
The set of edges of the graph $H$ consists of all edges $\{u_t,u_{t+1}\}$ from $G$ augmented by the following set of edges. Consider all nodes $x\in X$
in the order of their first appearance in the sequence $S$. Let $c_1,\dots,c_p$ be the free ports at $x$, listed in increasing order. We add edges $\{x,y_1\},\dots, \{x,y_p\}$ 
with the following ports: the port at $x$ corresponding to the edge  $\{x,y_q\}$ is $c_q$, and the port at $y_q$ corresponding to the edge  $\{x,y_q\}$ is the smallest port not yet used
at this node. This completes the construction of graph~$H$.}

{Since trajectory $S$ is clean, we have $d<m$. Since the size of $X$ is smaller than $m$, the graph $H$ has fewer than $2m$ nodes.
Since the size of $X$ is smaller than $n$ (in view of $m \leq n$), at least one port at some node of $X$ is free, and consequently $d\geq 1$.
It follows that some nodes $y_s$ were added to $G$ in order to construct $H$.
Nodes $y_s$ are not terms of the trajectory $S$ in $H$. This is a contradiction with the fact that $R(2m,x)$ allows to visit all nodes of any graph $F$ of size at most $2m$ from any node $x$ of $F$.}
\end{proof}

{We are now ready to prove the following theorem.}

\begin{theorem}
\label{theo:est}
{Procedure $ESST$ terminates in every graph $G$ after a number of steps polynomial in the size of $G$. Upon its termination, all {edges of $G$ are traversed 
by the agent}.}
\end{theorem}

\begin{proof}
{Let $n$ be the size of the graph $G$.
First observe that the procedure terminates at the latest after completion of phase {$i=9n+3$}. Indeed in this phase, every trajectory $R(2i,v)$, for any node $v$ of $G$, must be clean.
{Moreover, by the end of each trajectory $R(i,u)$, for any node $u$ of the trunc, the token must be met. Finally,} the number of possible codes recorded by the agent cannot exceed {$3n$ (there are at most $3n$ different codes in a graph of size $n$, as there are at most $3$ different codes for each node, {depending on whether the token is at one of the nodes or inside the edge corresponding to the extended edge on which it is located)} and thus at most $\frac{i-3}{3}<\frac{i}{3}$ codes in phase $i$.}}

{Let us estimate the number of edge traversals executed by the agent in some phase $j\geq1$ from some node $v$. The agent walks at most three times along the trunc corresponding to the trajectory $R(2j,v)$, and at most twice along each trajectory $R(j,u)$ from each node $u$ of the trunc of phase $j$. This gives a total of at most $3P(2j)+P(2j)*P(j)$ edge traversals, which is polynomial in $j$.}
{Hence the total number of edge traversals made by the agent by the end of phase $i$ is upper bounded by {$\frac{i}{3}.(3P(2i)+P(2i)*P(i))=\frac{9n+3}{3}(3P(2(9n+3))+P(2(9n+3))*P(9n+3))$, which is polynomial in $n$.}}

{It remains to show that if the agent stops upon completion of some phase $t \leq i$, then all {edges are traversed}. Consider this phase $t$.  By the description of the procedure, the main trajectory, corresponding to the trunc of phase $t$, must be clean, the token must be seen
in each trajectory $R(t,u)$, for each node $u$ of the trunc, made in this phase, and the number of codes recorded by the agent cannot exceed {$\frac{t}{3}-1$}.
Suppose by contradiction that $n\geq t$.
By Lemma \ref{lemma}, the set of distinct nodes visited during the main trajectory $R(2t,v)$ in phase $t$ is at least $t$.
On the other hand, by the description of the procedure, the agent has recorded at most {$\frac{t}{3}-1$} different codes.}
{Hence, there are $3$ distinct nodes $u'$, $u''$ and $u'''$ visited during the main trajectory $R(2t,v)$ and at least one code $C$ such that $C$ was recorded from each of these nodes (among  other possible codes from these nodes).}

{However, a given code cannot be recorded from more than two distinct nodes as otherwise that would imply that there is more than one token in $G$ or there is a node $w$ in the graph, for which edges to two of its neighbors correspond to the same port number at $w$, which is impossible. Hence, $u'$, $u''$ and $u'''$ cannot be all distinct, which is a contradiction.}

{This implies that $n<t$, and consequently all {edges of $G$ are traversed} during the main trajectory $R(2t,v)$. It follows that upon completion of phase $t$ all {edges of $G$ are traversed}.}
\end{proof}

{Note that the fact that all the edges of a graph $G$ are traversed during an execution of procedure $ESST$ implies that all nodes of $G$ are visited.}
We denote by $T(ESST(n))$ the maximum number of edge traversals in an execution of the procedure $ESST$
in a graph of size at most $n$.

{To complete this section, let us introduce some more notation.} For a positive integer $x$, by $|x|$ we denote the length of its binary representation, called the length of $x$. Hence $|x|=\lceil \log x \rceil$. All logarithms are with base 2.
For two agents, we say that the agent with larger (smaller) label is larger (resp. smaller). For any trajectory $T=(v_0,\dots,v_r)$, we denote by $\overline{T}$  the reverse trajectory $(v_r,\dots,v_0)$. For two trajectories $T_1= (v_0,\dots,v_r)$ and $T_2=(v_r,v_{r+1},\dots ,v_s)$ we denote by $T_1T_2$ the trajectory 
$(v_0,\dots,v_r, v_{r+1},\dots ,v_s)$. For any trajectory $T=(v_0,\dots,v_r)$, for which $v_r=v_0$ and for any positive integer $x$, we define $T^x$ to be $TT\dots T$, 
with $x$ copies of $T$.
For any trajectory $T$ we define $|T|$ to be the number of {edge traversals} in $T$.

\section{The rendezvous algorithm}

In this section we describe and analyze our rendezvous algorithm working at polynomial cost. Its high-level idea is based on the following observation.
If one agent follows an integral trajectory during some time interval, then it must either meet the other agent or this other agent must perform at least one complete edge traversal during this time
interval, i.e., it must make {\em progress}.  
A naive use of this observation leads to the following simple algorithm: an agent with label $L$ starting at node $v$ of a graph of size $n$ follows the trajectory $(R(n,v)\overline{R(n,v)})^{(2P(n)+1)^L}$
and stops.
Indeed, in this case the number of integral trajectories $R(n,v)\overline{R(n,v)}$ performed by the larger agent is larger than the number of edges traversed by the smaller agent and consequently, if they have not met before,  the larger agent must meet the smaller one after the smaller agent stops, because the larger agent will still perform at least one entire trajectory afterwards.
However,  this simple algorithm has two major drawbacks.
First, it requires knowledge of $n$ (or of an upper bound on it) and second, it is exponential in $L$, while we want an algorithm {\em polylogarithmic} in $L$. Hence the above observation has to be used in a much more subtle way. Our algorithm constructs a trajectory for each agent, polynomial in the size of the graph and polylogarithmic in the shorter label, i.e., polynomial in its length, which has the following {\em synchronization} property that holds in a graph of arbitrary unknown size. 
When one of the agents has already followed some part of its trajectory, it has either met the other agent, or this other agent must have completed
some other related part of its trajectory. (In a way, if the meeting has not yet occurred, the other agent has been ``pushed'' to execute some part of its route.) The trajectories are designed in such a way
that, unless a meeting has already occurred, the agents are forced to follow in the same time interval such parts of their trajectories that meeting is inevitable. A design satisfying this
synchronization property is difficult due to the arbitrary behavior of the adversary and is the main technical challenge of the paper.

\subsection{Formulation of the algorithm}

We first define several trajectories based on trajectories $R(k,v)$. Each trajectory
is defined using a starting node $v$ and a parameter $k$. Notice that, similarly as the basic trajectory $R(k,v)$, each of these trajectories (of increasing complexity)
can be defined in any graph, starting from any node~$v$: {in particular, for a fixed parameter $k$, a given trajectory always traverses the same number of edges, regardless of the graph and of the starting node}.

\begin{definition}
The trajectory $X(k,v)$ is the sequence of nodes  $R(k,v)\overline{R(k,v)}$.
\end{definition}

{In other terms, trajectory $X(k,v)$ consists in following trajectory $R(k,v)$ and then backtracking to node $v$ by following the reverse path $\overline{R(k,v)}$}.

\begin{definition}
\label{def:Q}
The trajectory $Q(k,v)$ is the sequence of nodes $X(1,v)X(2,v)\dots X(k,v)$ {(refer to Figure~\ref{fig:Q})}.
\end{definition}

\begin{figure*}[httb!]
	\begin{center}
	\includegraphics[width=0.3\textwidth]{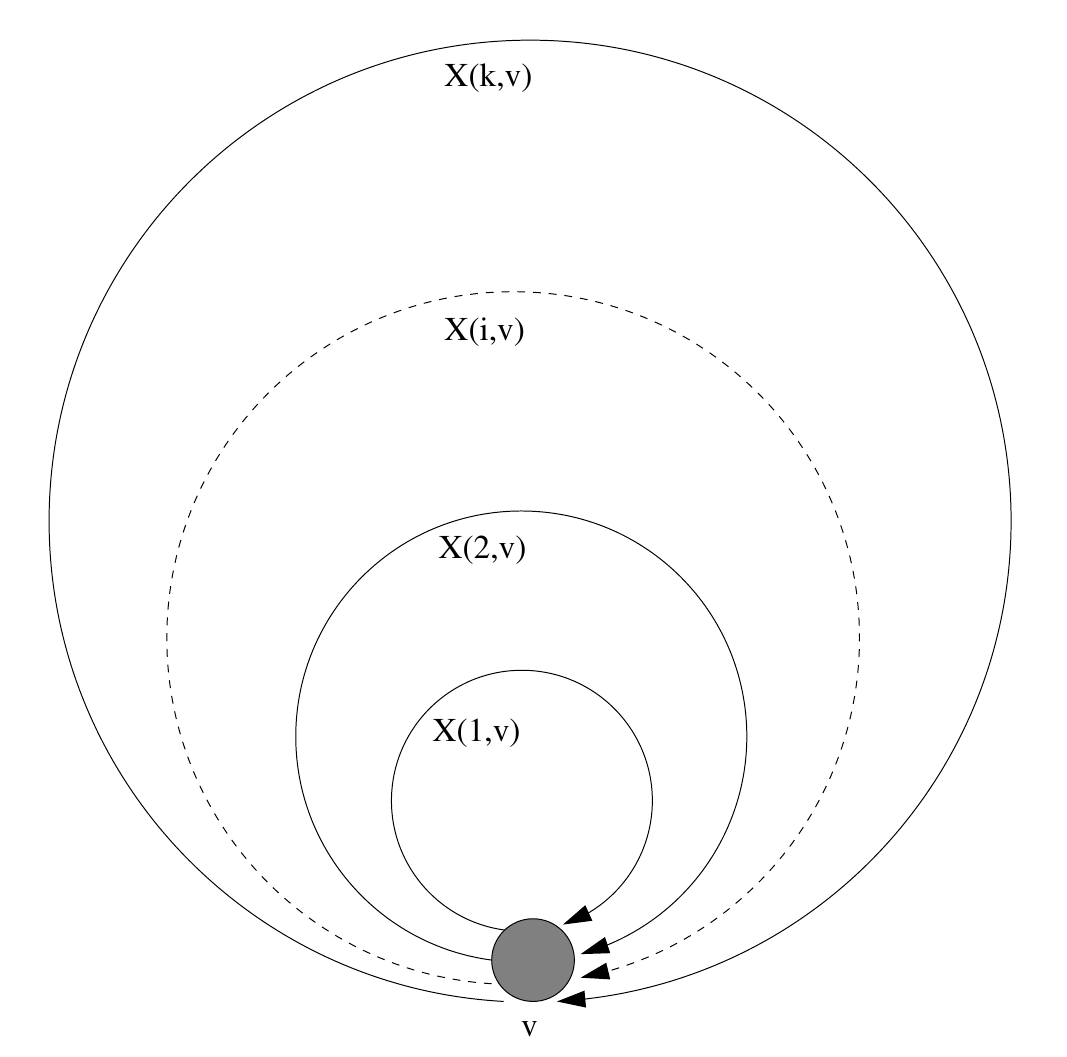}
	\caption{{A schematic representation of trajectory $Q(k,v)$ which is made up of a sequence of consecutive trajectories $X(i,v)$ from $i=1$ to $i=k$.}}
	\label{fig:Q}
	\end{center}
\end{figure*}

\begin{definition}
\label{def:Y}
Let $R(k,v_1)=(v_1,v_2,\dots v_s)$. Let $$Y'(k,v_1)=Q(k,v_1)(v_1,v_2)Q(k,v_2)(v_2,v_3)Q(k,v_3)\dots (v_{s-1},v_s) Q(k,v_s).$$ We define
the trajectory $Y(k,v_1)$ as $Y'(k,v_1)\overline{Y'(k,v_1)}$. 
\end{definition}

{In other terms, trajectory $Y(k,v_1)$ consists in following trajectory $Y'(k,v_1)$ and then backtracking to node $v_1$ by following the reverse path $\overline{Y'(k,v_1)}$. A schematic representation of $Y'(k,v_1)$ is depicted in Figure~\ref{fig:Y'}.}

\begin{figure*}[httb!]
	\begin{center}
	\includegraphics[width=0.5\textwidth]{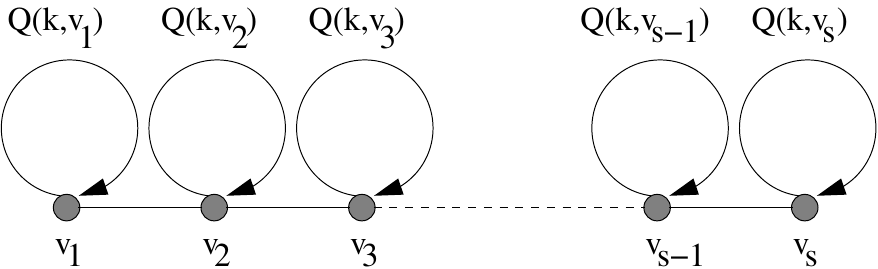}
	\caption{{A schematic representation of trajectory $Y'(k,v_1)$ which consists in following trajectory $R(k,v_1)=(v_1,v_2,\dots v_s)$ with the following insertions: for all $i<s$, before going from node $v_i$ to $v_{i+1}$ the agent follows trajectory $Q(k,v_i)$.}}
	\label{fig:Y'}
	\end{center}
\end{figure*}

\begin{definition}
\label{def:Z}
The trajectory $Z(k,v)$ is the sequence of nodes $Y(1,v)Y(2,v)\dots Y(k,v)$ {(refer to Figure~\ref{fig:Z})}.
\end{definition}

\begin{figure*}[httb!]
	\begin{center}
	\includegraphics[width=0.3\textwidth]{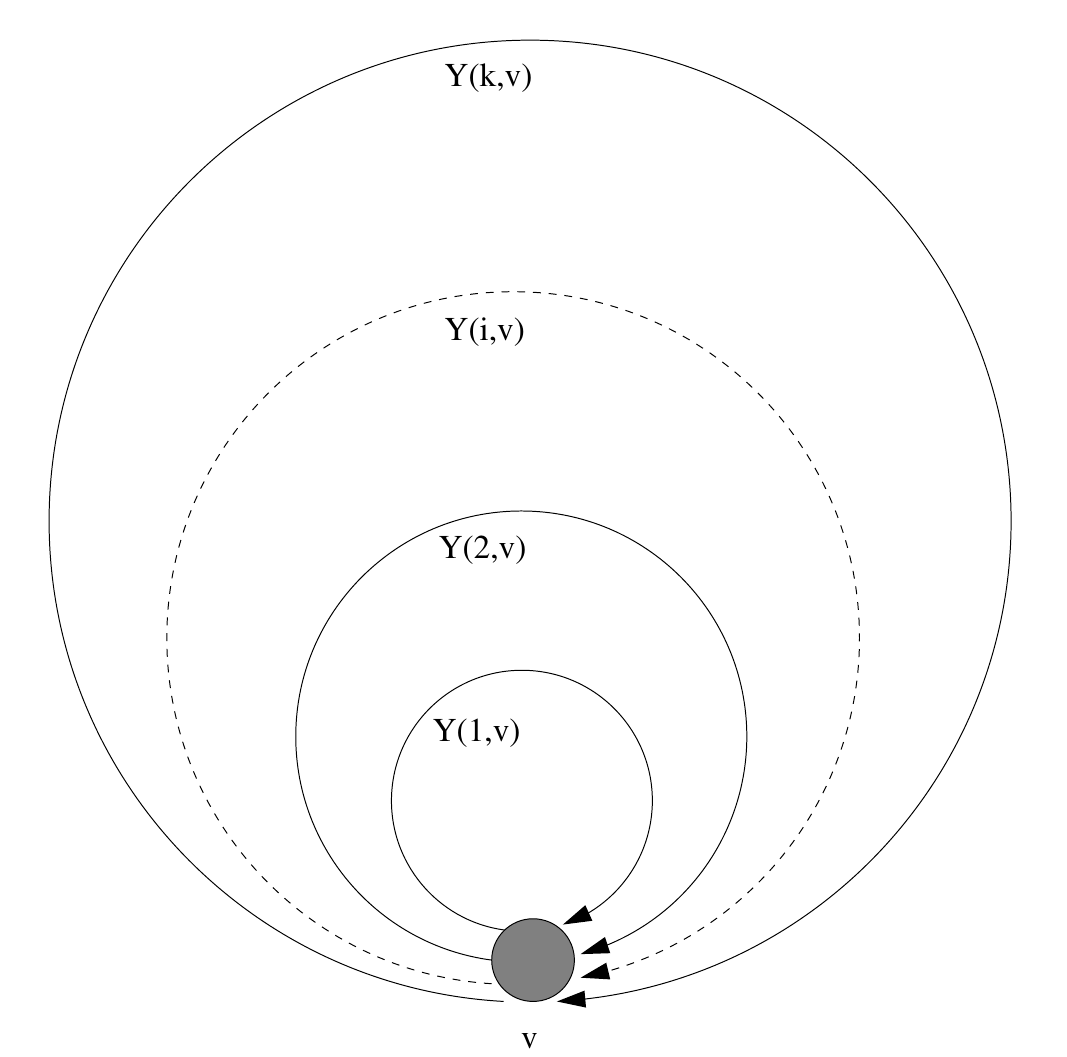}
	\caption{{A schematic representation of trajectory $Z(k,v)$ which is made up of a sequence of consecutive trajectories $Y(i,v)$ from $i=1$ to $i=k$.}}
	\label{fig:Z}
	\end{center}
\end{figure*}

\begin{definition}
\label{def:A}
Let $R(k,v_1)=(v_1,v_2,\dots v_s)$. Let $$A'(k,v_1)=Z(k,v_1)(v_1,v_2)Z(k,v_2)(v_2,v_3)Z(k,v_3)\dots (v_{s-1},v_s) Z(k,v_s).$$ We define
the trajectory $A(k,v_1)$ as $A'(k,v_1)\overline{A'(k,v_1)}$. 
\end{definition}
\newpage
{In other terms, trajectory $A(k,v_1)$ consists in following trajectory $A'(k,v_1)$ and then backtracking to node $v_1$ by following the reverse path $\overline{A'(k,v_1)}$. A schematic representation of $A'(k,v_1)$ is depicted in Figure~\ref{fig:A'}.}

\begin{figure*}[httb!]
	\begin{center}
	\includegraphics[width=0.5\textwidth]{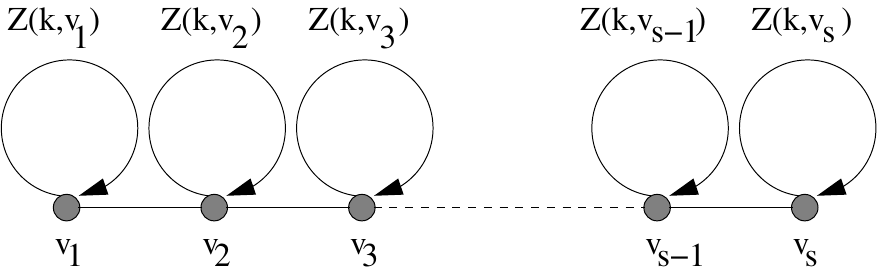}
	\caption{{A schematic representation of trajectory $A'(k,v_1)$ which consists in following trajectory $R(k,v_1)=(v_1,v_2,\dots v_s)$ with the following insertions: for all $i<s$, before going from node $v_i$ to $v_{i+1}$ the agent follows trajectory $Z(k,v_i)$.}}
	\label{fig:A'}
	\end{center}
\end{figure*}

{If the node $v$ is clear from the context, we will sometimes omit it, thus writing $X(k)$ instead of $X(k,v)$, etc.}

{In the following definition, $|A(4k)|$ corresponds to the number of edges that are traversed by following trajectory $A(4k,s)$ for any node $s$.} 
\begin{definition}
\label{def:B}
The trajectory $B(k,v)$ is the sequence of nodes $Y(k,v)^{2|A(4k)|}$. 
\end{definition}

{Below is the pseudocode describing how to follow trajectory $B(k,v)$.}

\begin{center}
\fbox{
\begin{minipage}{0.4\columnwidth}\small
{\bf for} $i$ {\bf from} 1 {\bf to} $2|A(4k)|$\\
\hspace*{0.8cm}Follow trajectory $Y(k,v)$\\
{\bf end for}
\end{minipage}
}
\end{center} 

{In the following definition, the value $|B(4k)|$ (resp. $|A(8k)|$) corresponds to the number of edges that are traversed by following trajectory $B(4k,s)$ (resp. $A(8k,s)$) for any node $s$.}

\begin{definition}
\label{def:K}
The trajectory $K(k,v)$ is the sequence of nodes $X(k,v)^{2(|B(4k)|+|A(8k)|)}$. 
\end{definition}

{Below is the pseudocode describing how to follow trajectory $K(k,v)$.}

\begin{center}
\fbox{
\begin{minipage}{0.4\columnwidth}\small
{\bf for} $i$ {\bf from} 1 {\bf to} $2(|B(4k)|+|A(8k)|)$\\
\hspace*{0.8cm}Follow trajectory $X(k,v)$\\
{\bf end for}
\end{minipage}
}
\end{center} 

{In the following definition, the value $|K(k)|$ corresponds to the number of edges that are traversed by following trajectory $K(k,s)$ for any node $s$.}

\begin{definition}
\label{def:ome}
The trajectory $\Omega(k,v)$ is the sequence of nodes $X(k,v)^{(2k-1)|K(k)|}$.
\end{definition}

{Below is the pseudocode describing how to follow trajectory $\Omega(k,v)$.}

\begin{center}
\fbox{
\begin{minipage}{0.4\columnwidth}\small
{\bf for} $i$ {\bf from} 1 {\bf to} $(2k-1)|K(k)|$\\
\hspace*{0.8cm}Follow trajectory $\Omega(k,v)$\\
{\bf end for}
\end{minipage}
}
\end{center}

{Using the above defined trajectories we are now ready to describe Algorithm RV-asynch-poly executed by an agent with label $L$ in an arbitrary graph. Below we give the pseudocode of this algorithm and then the main intuition that is behind it.}

The agent first modifies its label. If $x=(c_1\dots c_r)$ is the binary representation of $L$, define the {\em modified label} of the agent to be the sequence $M(x)=(c_1c_1c_2c_2\dots c_rc_r01)$.  
Note that, for any $x$ and $y$, the sequence $M(x)$ is never a prefix of $M(y)$.
Also, $M(x) \neq M(y)$ for $x\neq y$. 

\vspace*{0.5cm}

\begin{center}
\fbox{
\begin{minipage}{11cm}

\noindent
{\bf Algorithm RV-asynch-poly}.

\vspace*{0.5cm}

Let $x$ be the binary representation of the label $L$ of the agent and let $M(x)=(b_1b_2\dots b_s)$. Let $v$ be the starting node of the agent.

\vspace*{0.5cm}

\noindent
Execute until rendezvous.

\noindent
$i=1$;\\
$k=1$;\\
{\bf repeat}\\
\hspace*{0.5cm}{\bf  while}  $i\leq min(k,s)$ {\bf do} \\
\hspace*{1cm}{\bf if} $b_i=1$ {\bf then} follow the trajectory $B(2k,v)^2$\\
\hspace*{1cm}{\bf else} follow the trajectory $A(4k,v)^2$\\
\hspace*{1cm}{\bf if} $min(k,s)> i$ {\bf then} follow the trajectory $K(k,v)$\\
\hspace*{1cm}{\bf else} follow the trajectory $\Omega(k,v)$\\
\hspace*{1cm}$i:=i+1$\\
\hspace*{0.5cm}$i:=1$\\
\hspace*{0.5cm}$k:=k+1$

\end{minipage}
}
\end{center}

{The main idea of the above formulated algorithm is the following. In order to guarantee rendezvous, symmetry in the actions of the agents must be broken.
 Since agents have different transformed labels, this can be done by designing the algorithm so that each agent processes consecutive bits of its transformed label, acting differently when the current bit is 0 and when it is 1.  (The way of processing each bit is described in the ``while'' loop.) The aim is to force rendezvous when each agent processes the bit corresponding to the position where their transformed labels {first} differ. This approach requires overcoming two major difficulties. The first is that, due to the behavior of the asynchronous adversary,
 agents may execute corresponding bits of their transformed labels at different times. This problem is solved in our algorithm by using trajectories of type $K$ and $\Omega$ (refer to Definitions~\ref{def:K} and~\ref{def:ome}), in order to synchronize the agents. These trajectories have the following role in this synchronization effort: for $k\geq n$, trajectories $K(k)$ and $\Omega(k)$ executed by one agent push the other agent to proceed in its execution or otherwise rendezvous is accomplished. The joint application of these two specific trajectories {ends up forcing} the agents to push each other in such a way that at some point they process almost simultaneously the bit on which they differ.}

{ The second difficulty is to orchestrate rendezvous after the first difficulty has been overcome, i.e., when each agent 
 processes this bit. This is done by making use of trajectories of type $A$ and $B$ (refer to Definitions~\ref{def:A} and~\ref{def:B}). Our algorithm is designed in such a way that processing bit $0$ consists in following twice a trajectory of type $A$ (for some parameters), while processing bit $1$ consists in following twice a trajectory of type $B$ (for some parameters). This choice in the design stems from the desire to exploit the following feature: if two agents $a$ and $b$ simultaneously start to follow respectively trajectory $A(k,u)$ and $B(k,v)$ (for any $k\geq n$ and for any nodes $u$ and $v$) the rendezvous must occur by the time an agent terminates its trajectory first. Indeed, this is the case if $A(k,u)$ is finished before $B(k,v)$, as $B(k,v)$ consists in repeating $Y(k,v)= Y'(k,v)\overline{Y'(k,v)}$, while $A(k,u)$ allows agent $a$ to follow $Y'(k,s)\overline{Y'(k,s)}$ at least once from every node $s$ of the graph: roughly speaking, agent $a$ ends up ''catching'' agent $b$.
Otherwise (when $B(k,v)$ is finished by the time $A(k,u)$ is finished), this is also the case, as $B(k,v)$ consists in repeating the trajectory $Y(k,v)$, which is integral, more times than there are edges to traverse when following $A(k,u)$: roughly speaking, agent $b$ ends up ''catching'' agent $a$.}

{Of course, the occurrence of the kind of situation described above is ideal. However, we actually ensure only the occurence of a more general situation in which trajectories $A$ and $B$ may be followed from starting times that are "slightly" different and for a parameter $k$ that may also be different for each of them. Hence, to handle this, the algorithm is enriched by additionnal technical ingredients that are necessary to guarantee correctness. (It is particularly for these technical reasons that trajectory $A$ (resp. B) is repeated twice instead of only once when the processed bit corresponds to $0$ (resp. 1) and that agents follow trajectory $A(4k)$ or $B(2k)$ instead of simply $A(k)$ or $B(k)$).}

{We will show that the synchronization, and hence also rendezvous, always occurs soon enough to guarantee that every execution of the algorithm has necessarily a polynomial cost.}

\subsection{Proof of correctness and cost analysis}

We will use the following terminology refering to parts of the trajectory constructed by Algorithm RV-asynch-poly. The part before the start of $\Omega(1,v)$ is called the {\em first piece}
and is denoted $\mathcal{T}(1)$,
the part between the end of $\Omega(1,v)$ and the beginning of $\Omega(2,v)$ is called the {\em second piece} and is denoted $\mathcal{T}(2)$, etc. In general, the part  
 between the end of $\Omega(i-1,v)$ and the beginning of $\Omega(i,v)$ is called the $i$th {\em  piece} and is denoted $\mathcal{T}(i)$.
 The trajectory $\Omega(r,v)$ between pieces $\mathcal{T}(r)$ and $\mathcal{T}(r+1)$, is called the $r$th {\em fence}.
 
 Inside each piece, the trajectory $B(2k,v)^2$ and the trajectory  $A(4k,v)^2$ are called {\em segments}. Each of the two trajectories $B(2k,v)$ in the segment $B(2k,v)^2$ and each of the two trajectories  $A(4k,v)$ in the segment  $A(4k,v)^2$ are called {\em atoms}. We denote by $S_i(k)$ the segment in the $k$th piece corresponding to the bit $b_i$ in $M(x)$.  Each trajectory
 $K(k,v)$ is called a {\em border}. We denote by $K_{j,j+1}(k)$ the border between the segment $S_j(k)$ and the segment $S_{j+1}(k)$.
 
 We start with the following fact that will be often used in the sequel.
 
 \begin{lemma}\label{tunel}
 Suppose that agents $a$ and $b$ operate in a graph $G$. Let $v$ be a node of $G$ and let $m$ be a positive integer. If in some time interval $I$ agent $b$ keeps repeating the trajectory $X(m,v)$ and agent $a$ follows at least one entire trajectory $X(m,v)$, then the agents must meet during time interval $I$.
 The lemma remains true when $X$ is replaced by $Y$.
 \end{lemma}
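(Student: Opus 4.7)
The plan is to exploit the palindromic structure $X(m,v) = R(m,v)\overline{R(m,v)}$: both agents, while following copies of $X(m,v)$, walk along the same fixed sequence of edges, going from $v$ out to the far end of $R(m,v)$ and then back. I would reduce the existence of a meeting to a one-dimensional Intermediate Value Theorem argument on a single ``route position'' parameter.

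Concretely, let $s$ denote the number of edges of $R(m,v)$ and introduce a parameter $\tau \in [0,s]$ that measures how far along $R(m,v)$ an agent currently stands: $\tau = 0$ corresponds to $v$, $\tau = s$ to the far endpoint of $R(m,v)$, and an intermediate $\tau$ specifies a unique edge of $R(m,v)$ together with a unique fractional position on that edge. An agent following one copy of $X(m,v)$ has its $\tau$-coordinate rise continuously from $0$ to $s$ during the forward half and fall continuously back to $0$ during the reverse half. Since every copy both begins and ends at $v$ (i.e.\ at $\tau = 0$), concatenating copies keeps $\tau$ continuous as a function of time. The key observation is that whenever $\tau_a(t) = \tau_b(t)$, the two agents sit on the same physical edge at the same fractional position, hence they meet.

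With this setup, I would pick a sub-interval $J = [t_1,t_2] \subseteq I$ during which agent $a$ performs exactly one full copy of $X(m,v)$, so that $\tau_a(t_1) = 0$ and $\tau_a(t^*) = s$ for some $t^* \in (t_1,t_2)$ (the moment $a$ reaches the far end of $R(m,v)$). Throughout $J$, agent $b$ is inside its own ongoing sequence of $X(m,v)$ copies, so $\tau_b$ is continuous on $J$ with values in $[0,s]$. The function $D(t) = \tau_a(t) - \tau_b(t)$ is then continuous on $[t_1, t^*]$ with $D(t_1) = -\tau_b(t_1) \leq 0$ and $D(t^*) = s - \tau_b(t^*) \geq 0$, so by IVT some $t' \in [t_1,t^*]$ satisfies $\tau_a(t') = \tau_b(t')$, producing the required meeting.

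For the $Y$ version, the trajectory $Y(m,v) = Y'(m,v)\overline{Y'(m,v)}$ has exactly the same palindromic shape, so I would rerun the identical argument with $Y'(m,v)$ playing the role of $R(m,v)$; one only needs that $Y(m,v)$ is a closed walk returning to its starting node, which it is by construction. The main technical point to get right is the continuity of $\tau_b$ across the boundary between consecutive copies and the claim that equal $\tau$-values genuinely correspond to coinciding physical positions; both follow directly from the trajectory being a fixed, explicit sequence of edges, so I do not expect any deeper obstacle beyond cleanly setting up the parametrization.
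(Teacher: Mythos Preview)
Your proof is correct and in fact more streamlined than the paper's. The paper argues by case analysis: it fixes the moment when $a$ begins following $R(m,v)$, asks whether $b$ is then on the forward half $R$ or on the reverse half $\overline{R}$, and in the former case further splits on which agent completes $R$ first, invoking in each branch an informal ``catches up'' or ``reverse path'' collision. Your IVT argument subsumes all of these cases at once by tracking the single scalar $\tau$ along the common route: since $\tau_a$ sweeps from $0$ up to the maximum while $\tau_b$ is confined to the same interval throughout, the difference must vanish somewhere. This buys you a proof with no case split, at the small cost of having to justify the continuity of $\tau_b$ across the boundary between consecutive copies of $X(m,v)$---which, as you note, is immediate because every copy both begins and ends at $\tau=0$. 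The paper's version is shorter to state but leans more on intuition; yours is the argument one would want if pressed on details. Both transfer to $Y$ identically via the same palindromic structure $Y(m,v)=Y'(m,v)\overline{Y'(m,v)}$.
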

 
 \begin{proof}
 Let $R=R(m,v)$. By definition, $X(m,v)=R\overline{R}$. During the time interval $I$ agent $a$ follows the entire trajectory $R$ at least once.
 If at the time when $a$ starts following $R$, agent $b$ is following $\overline{R}$, then they have to meet before $a$ finishes ${R}$ because $b$
is on a reverse path with respect to $a$. If at the time when $a$ starts following $R$, agent $b$ is also following ${R}$, then they are two cases to consider.

Case 1. ${a}$ completes trajectory $R$ before $b$ or simultaneously.\\ In this case they must meet because $a$ ``catches'' $b$.

Case 2.  $b$ completes trajectory $R$ before $a$.\\ In this case agent $b$ starts following trajectory $\overline{R}$ before  the time when $a$ completes $R$.
Hence agents must meet by the time $a$ completes trajectory $R$ because $b$ is on a reverse path with respect to $a$. 

For $Y$ instead of $X$ the argument is similar.
 \end{proof}

The following five lemmas establish various synchronization properties concerning the execution of the algorithm by the agents. 
They show that, unless agents have already met before, if one agent executes some part of  
Algorithm RV-asynch-poly, then the other agent must execute some other related part of it. These lemmas show the interplay of pieces, fences, segments,
atoms and borders that are followed by each of the agents:
these trajectories are the milestones of synchronization. In all lemmas we
suppose that agents $a$ and $b$ execute Algorithm RV-asynch-poly in a graph of size $n$, and we let $l$ to be the length of the smaller of their modified labels.  

\begin{lemma}
\label{lem:1}
If the agents have not met before, then by the time one of the agents completes its  $(n+l+i)$th fence, then the other agent must have completed its $(i+1)$th piece.
\end{lemma}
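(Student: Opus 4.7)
The plan is to build on a key observation (different from Lemma \ref{tunel} in that it does not require the two agents to share a starting node for the repeated trajectory): during any time interval in which one agent completes an integral trajectory, either the agents meet or the other agent must fully traverse at least one edge. I would justify this by a continuity argument. If agent $b$ completes no edge traversal during the interval, then the adversary's walk for $b$ is confined to a single edge $e$ throughout. But agent $a$'s integral walk covers $e$ at some sub-interval, moving continuously from one endpoint of $e$ to the other. Parameterizing $e$ as the segment $[0,1]$, the difference of the two position functions changes sign on that sub-interval, so the intermediate value theorem yields a point in time at which $a$ and $b$ coincide.

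I would then apply this observation to agent $a$'s $(n+l+i)$th fence, namely $\Omega(n+l+i,v_a)=X(n+l+i,v_a)^{(2(n+l+i)-1)|K(n+l+i)|}$. Since $n+l+i\geq n$, Reingold's trajectory $R(n+l+i,v_a)$ covers every edge of the graph, and hence so does $X(n+l+i,v_a)=R(n+l+i,v_a)\overline{R(n+l+i,v_a)}$. Each of the $(2(n+l+i)-1)|K(n+l+i)|$ consecutive copies of $X(n+l+i,v_a)$ executed by $a$ within this fence therefore forces $b$ to complete at least one edge traversal during that copy, for otherwise the agents would meet, contradicting the hypothesis of the lemma. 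Summing over all copies, $b$ performs at least $(2(n+l+i)-1)|K(n+l+i)|$ edge traversals during the $(n+l+i)$th fence of $a$ alone.

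The conclusion follows from a size comparison. The total number of edges in $b$'s route up to the end of its $(i+1)$th piece is bounded by $\sum_{j=1}^{i+1}|\mathcal{T}(j)|+\sum_{j=1}^{i}|\Omega(j)|$. Unfolding the hierarchical definitions, $|K(n+l+i)|$ incorporates the factors $|X(n+l+i)|$, $|B(4(n+l+i))|$ and $|A(8(n+l+i))|$, each of which strictly dominates the analogous quantity at the lower parameter $i+1$. A routine calculation shows that the single term $(2(n+l+i)-1)|K(n+l+i)|$ already exceeds the prefix length above; since $b$ traverses its route in order, $b$ must have completed its $(i+1)$th piece by the time $a$ finishes its $(n+l+i)$th fence.

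The main obstacle is the bookkeeping in the size comparison: one must carefully unwind the nested definitions of $X$, $Q$, $Y$, $Z$, $A$, $B$, $K$ to verify that the growth rate of $|K(\cdot)|$ dominates that of the $|\mathcal{T}(\cdot)|$ and $|\Omega(\cdot)|$ summands at smaller parameters. The offset $n$ in the statement serves to guarantee integrality of $X(n+l+i,v_a)$, while the offset $l$ provides additional slack—likely to be exploited in subsequent synchronization lemmas that compare the two agents at labels of possibly different lengths.
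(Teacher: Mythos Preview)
Your opening observation---that each integral trajectory completed by one agent forces at least one full edge traversal by the other, or a meeting---is correct and is exactly the intuition the paper states at the start of Section~3. The gap is in the ``routine calculation'' that follows: the inequality
\[
(2(n+l+i)-1)\,|K(n+l+i)|\;>\;\sum_{j=1}^{i+1}|\mathcal{T}(j)|+\sum_{j=1}^{i}|\Omega(j)|
\]
is \emph{false} in general. Just the last term on the right already kills it. By Definition~\ref{def:ome}, $|\Omega(i)|\approx (2i-1)\,|K(i)|\,|X(i)|$, while by Definition~\ref{def:K}, $|K(k)|\approx 2(|B(4k)|+|A(8k)|)\,|X(k)|$. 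Hence the left side carries a single factor $|X(n+l+i)|$, whereas $|\Omega(i)|$ carries $|X(i)|^2$. For $i$ large relative to the fixed offset $n+l$, all the remaining polynomial factors on the two sides are comparable, so the extra $|X(i)|$ makes $|\Omega(i)|$ alone exceed the left side. In short, a single fence of the fast agent does not contain enough integral trajectories to push the slow agent through all of its earlier fences; those earlier fences are simply too long.

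The paper circumvents this by an alternating induction that never needs to dominate a fence by counting. It shows: when $b$ finishes fence $n+l$, $a$ has finished piece~$1$ (counting, since $|\mathcal{T}_a(1)|$ is a constant); when $b$ finishes \emph{piece} $n+l+1$, $a$ has finished \emph{fence}~$1$; when $b$ finishes fence $n+l+1$, $a$ has finished piece~$2$; and so on. The step ``piece of $b$ forces fence of $a$'' is not a size comparison at all: it uses Lemma~\ref{tunel}. While $a$ is stuck in its $j$th fence it is repeating $X(j,v_a)$; the piece $\mathcal{T}_b(n+l+j)$ contains an atom $A(4(n+l+j))$ or $B(2(n+l+j))$, and by Definitions~\ref{def:A}, \ref{def:B}, \ref{def:Z}, \ref{def:Y}, \ref{def:Q} such an atom contains a full copy of $X(j,u)$ for \emph{every} node $u$ of the graph (since $4(n+l+j)\ge n$ makes the underlying $R$-trajectory visit all nodes). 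Lemma~\ref{tunel} then forces a meeting. This qualitative catch-up argument is precisely what lets the proof avoid comparing anything against $|\Omega(j)|$, and it is the idea missing from your approach.
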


\begin{proof}
Without loss of generality assume that agent $b$ is the first to complete its $(n+l+i)$th fence 
$\Omega_b(n+l+i)$. When $b$ completed its $(n+l)$th fence $\Omega_b(n+l)$, agent $a$ must have completed its first piece $\mathcal{T}_a(1)$, otherwise $a$ and $b$ must have met because the trajectory $\Omega_b(n+l)$ contains more integral trajectories $X(n+l)$ than there are {edge traversals} in the trajectory $\mathcal{T}_a(1)$. Indeed, according to Algorithm RV-asynch-poly, the number of {edge traversals} in $\mathcal{T}_a(1)$ is bounded by $2(|A(4)|+|B(2)|)$, while according to Definitions~\ref{def:ome} and~\ref{def:K}, the number of integral trajectories $X(n+l)$ within $\Omega_b(n+l)$ is equal to {$(2(n+l)-1)|K(n+l)|=(2(n+l)-1)(|B(4(n+l))|+|A(8(n+l))|)|X(n+l)|$}, which is larger than $2(|A(4)|+|B(2)|)$ since $n+l\geq 2$.

When $b$ completes its $(n+l+1)$th piece $\mathcal{T}_b(n+l+1)$, agent $a$ must have completed its first fence $\Omega_a(1)$. Suppose not. This implies
that  while agent $b$ follows $\mathcal{T}_b(n+l+1)$, agent $a$ must follow
only its first fence $\Omega_a(1)$ or a part of it. This fence consists of repeating the trajectory
$X(1)$. Agent $b$ follows at some point the trajectory $A(4(n+l+1))$
or the trajectory $B(2(n+l+1))$ in its $(n+l+1)$th piece $\mathcal{T}_b(n+l+1)$.
By Definitions \ref{def:A} and \ref{def:B}, agent $b$ must have completed $X(1,u)$, for any node $u$ of the graph,
and hence must have met $a$, in view of Lemma \ref{tunel} which is a contradiction.

Similarly we prove that when $b$ completes its $(n+l+1)$th fence $\Omega_b(n+l+1)$, agent $a$ must have completed its second piece $\mathcal{T}_a(2)$,
and when $b$ completes its $(n+l+2)$th piece $\mathcal{T}_b(n+l+2)$, agent $a$ must have completed its second fence $\Omega_a(2)$. 
In general, it follows by induction on $i$ that when $b$ completes its $(n+l+i)$th fence $\Omega_b(n+l+i)$, agent $a$ must have completed its $(i+1)$th piece $\mathcal{T}_a(i+1)$.
\end{proof}

\begin{lemma}
\label{claim1}
Let $b$ be the first agent to complete its $(2(n+l))$th fence. If the agents have not met before, then during the time segment in which agent $b$ follows its
$(2(n+l))$th fence, agent $a$ follows a trajectory included in {$M\Omega_a(j)N$}, for some fixed $j$ satisfying $n+l+1\le j \le 2(n+l)$, where {$M$} is the last atom of its $j$th piece $\mathcal{T}_a(j)$, $\Omega_a(j)$ is its $j$th fence,
and {$N$} is the first atom of its $(j+1)$th piece $\mathcal{T}_a(j+1)$.
This $j$ will be called the index of agent $a$.
\end{lemma}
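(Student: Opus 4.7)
The plan is to combine Lemma~\ref{lem:1} (for coarse localization of $a$) with Lemma~\ref{tunel} applied to the nested self-similar structure of the building-block trajectories (for fine-grained confinement), to show that during $b$'s $(2(n+l))$th fence, $a$'s trajectory must be confined to a single piece-junction window $r\Omega_a(j)s$. Let $[t_1,t_2]$ be the time interval during which $b$ executes $\Omega_b(2(n+l))$.

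For the coarse localization, I would apply Lemma~\ref{lem:1} with $i=n+l-1$ at $t_1$ and with $i=n+l$ at $t_2$ to obtain that $a$ has completed its $(n+l)$th piece by $t_1$ and its $(n+l+1)$st piece by $t_2$; combined with the hypothesis that $b$ is first to complete $\Omega_b(2(n+l))$, $a$ has not completed its own $(2(n+l))$th fence by $t_2$. These one-sided bounds already force any candidate index $j$ to lie in $[n+l+1,2(n+l)]$.

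The technical heart of the argument is a \emph{no-full-atom} property: during $[t_1,t_2]$, $a$ cannot traverse in its entirety any atom $B(2k,v_a)$ or $A(4k,v_a)$ of a piece of index $k\ge n+l+1$. Unfolding Definitions~\ref{def:Y}--\ref{def:B}, any such $B$-atom contains a contiguous copy of $Y(2k,v_a)$, which (because $2k\ge 2(n+l)\ge n$ forces Reingold's walk starting at $v_a$ to visit $v_b$ among its stops) contains $Q(2k,v_b)$, and $Q(2k,v_b)$ in turn contains $X(2(n+l),v_b)$. Since $b$ is repeating $X(2(n+l),v_b)$ throughout $[t_1,t_2]$, a full traversal of such an atom by $a$ would force a meeting via Lemma~\ref{tunel}, contradicting the standing hypothesis. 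The analogous argument for $A$-atoms uses the chain $A\supset Z\supset Y(2(n+l),v_a)\supset Q(2(n+l),v_b)\supset X(2(n+l),v_b)$.

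To assemble the conclusion, the no-full-atom property ensures that between the (possibly partial) atom occupied by $a$ at $t_1$ and the atom occupied at $t_2$ no fully traversed intermediate atom appears; otherwise such an intermediate atom (of piece index $\ge n+l+1$, by the coarse step) would contradict the no-full-atom property. Since atoms in each piece are organized into segments separated by borders, with fences between consecutive pieces, the only adjacencies of partial atoms avoiding a full atom between them are: same segment, adjacent segments within a piece, or consecutive pieces across a fence. The first two configurations would confine $a$'s range entirely to one piece, and I would rule them out by observing that, by Definitions~\ref{def:ome}--\ref{def:K}, the number of integral $X(2(n+l),v_b)$ subtrajectories in $\Omega_b(2(n+l))$ is large enough that the progress observation underlying Lemma~\ref{lem:1} forces $a$ to traverse more edges during $[t_1,t_2]$ than fit inside any single piece of index at most $2(n+l)$. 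Consequently $a$'s range lies within some window $r\Omega_a(j)s$ with $j\in[n+l+1,2(n+l)]$. The main obstacle I anticipate is this last size-comparison step, which requires careful bookkeeping of the nested quantities $|\Omega|$, $|K|$, $|A|$, $|B|$ to confirm that $a$ cannot remain inside a single piece during the full interval $[t_1,t_2]$.
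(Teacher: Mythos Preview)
Your proposal is essentially correct and rests on the same two ingredients as the paper's proof: (i) a size comparison showing that $\Omega_b(2(n+l))$ contains more integral copies of $X(2(n+l))$ than there are edges in any piece $\mathcal{T}_a(k)$ with $k\le 2(n+l)$, and (ii) the observation that any atom $B(2k,v_a)$ or $A(4k,v_a)$ with $k\ge n+l+1$ contains a full $X(2(n+l),v_b)$, so Lemma~\ref{tunel} forbids $a$ from traversing such an atom entirely while $b$ repeats $X(2(n+l),v_b)$. The paper simply applies these in the opposite order: first (i) to conclude that $a$ must start \emph{some} fence $\Omega_a(j)$ during $I$, then (ii) twice to show that $a$ neither fully traversed the last atom $r$ of $\mathcal{T}_a(j)$ nor the first atom $s$ of $\mathcal{T}_a(j+1)$ during $I$; this immediately yields the window $r\Omega_a(j)s$ without a separate case analysis on adjacent-atom configurations.

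Two small points to tighten in your write-up. First, your assembly step tacitly assumes $a$ sits inside an atom at both $t_1$ and $t_2$; you should also dispose of the cases where $a$ is inside a border $K$ or a fence $\Omega$ at an endpoint (the same size/no-full-atom arguments handle these, but they need to be stated). Second, applying Lemma~\ref{lem:1} with $i=n+l-1$ at $t_1$ only gives that $a$ has completed piece $n+l$, leaving open the possibility that $a$ is still inside $\Omega_a(n+l)$ at $t_1$; to force $j\ge n+l+1$ you should combine this with the $i=n+l$ instance at $t_2$ (which says $a$ has completed piece $n+l+1$ by then) together with your no-full-atom property, or—more directly, as the paper does—use the intermediate step in the proof of Lemma~\ref{lem:1} stating that when $b$ completes piece $2(n+l)$, $a$ has already completed fence $n+l$.
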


\begin{proof}
Consider the time interval $I$ during which agent $b$ follows its $(2(n+l))$th fence $\Omega_b(2(n+l))$. If during this time interval agent $a$ has not started any fence,
it would have to follow a trajectory included in a piece $\mathcal{T}_a(k)$ for some $k\le 2(n+l)$, because $b$ was the first agent to complete its $(2(n+l))$th fence.
By Definition~\ref{def:ome}, the trajectory $\Omega_b(2(n+l))$ contains more copies of the integral trajectory {$X(2(n+l))$} than there are edge traversals done by agent $a$. 
Indeed, according to Algorithm RV-asynch-poly, the number of {edge traversals} in $\mathcal{T}_a(k)$ is bounded by {$(k-1)|K(k)| + k(2(|A(4k)|+|B(2k)|)) < (2k-1)|K(k)|$}, which is at most $2(2(n+l)-1)|K(2(n+l))|$ for $k\le 2(n+l)$, while the number of integral trajectories $X(2(n+l))$ in $\Omega_b(2(n+l))$ is equal to $2(2(n+l)-1)|K(2(n+l))|$.
Hence the agents would have met, which is a contradiction.

Hence agent $a$ must have started some fence during the time interval $I$. By Lemma~\ref{lem:1}, during the time interval $I$ agent $a$ must have started its
$j$th fence  $\Omega_a(j)$, for some $j\in\{n+l+1,\ldots,2(n+l)\}$. Moreover, during the time interval $I$ agent $a$ could not have followed the entire last atom {$M$} 
of its $j$th piece $\mathcal{T}_a(j)$. Indeed, this would mean that during the time interval $I$ agent $a$ has 
entirely followed either the trajectory $B(2j)$ or the trajectory $A(4j)$.
In the first case, since $j\ge n+l+1$,
this would imply that during the time interval $I$, agent $a$ followed an entire trajectory $B(k,v)$,
where $v$ is the starting node of $a$, for
$k\ge 2(n+l+1)$, while $b$ followed only all or a part of the trajectory $\Omega_b(2(n+l))$  consisting of repetitions of $X(2(n + l))$.
{In view of Lemma~\ref{tunel},} this would force a meeting because, by Definition~\ref{def:B},
the trajectory $B(k,v)$, for $k\ge 2(n+l+1)$ contains at least one trajectory $X(2(n+l),u)$ for every node $u$ of the graph. In the second case, in view of $j\ge n+l+1$, a meeting would be forced in a similar way, because the trajectory $A(4j,v)$, also contains at least one trajectory $X(2(n+l),u)$ for every node $u$ of the graph.

This shows that  agent $a$ has started the last atom {$M$} of its $j$th piece $\mathcal{T}_a(j)$ during the time interval $I$. Using a similar argument we prove that agent $a$ could not 
complete the first atom {$N$} of  its $(j+1)$th piece during the time interval $I$. This completes the proof. 
\end{proof}

\begin{lemma}
\label{claim2}
Let $b$ be the first agent to complete its $(2(n+l))$th fence. If the agents have not met before, then by the time agent $b$ completes its $(2(n+l))$th fence,
agent $a$ must have completed the last {atom $M$} of its $j$th piece, where $j$ is the index of agent $a$.
\end{lemma}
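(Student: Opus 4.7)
The plan is a proof by contradiction that closely mirrors the pigeonhole step inside the proof of Lemma~\ref{claim1}. Suppose for contradiction that at time $t_{\mathrm{end}}$, the moment agent $b$ completes its $(2(n+l))$th fence, agent $a$ has not yet finished traversing the last atom $r$ of its $j$th piece. By Lemma~\ref{claim1} the walk of $a$ during the interval $I$ in which $b$ follows $\Omega_b(2(n+l))$ is included in $r\,\Omega_a(j)\,s$, so in particular the route position of $a$ at the beginning of $I$ is at or past the first node of $r$. Combined with the contradiction hypothesis (the position of $a$ at $t_{\mathrm{end}}$ lies strictly before the last node of $r$), this confines the entire walk of $a$ during $I$ to a sub-trajectory of the single atom $r$, bounding the number of edges traversed by $a$ during $I$ by $|r|$.

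I then compare this budget against $b$'s activity during $I$. By Definition~\ref{def:ome}, $\Omega_b(2(n+l))$ is the concatenation of $(4(n+l)-1)\,|K(2(n+l))|$ copies of the integral trajectory $X(2(n+l),v_b)$ (integral because $2(n+l)\ge n$). The key inequality to establish is $(4(n+l)-1)|K(2(n+l))|>|r|$. Since $j\le 2(n+l)$, the atom $r$ equals $B(2j,v_a)$ or $A(4j,v_a)$, hence $|r|\le\max\bigl(|B(4(n+l))|,|A(8(n+l))|\bigr)$; meanwhile Definition~\ref{def:K} gives $|K(2(n+l))|\ge 2|B(8(n+l))||X(2(n+l))|$, and unwinding Definitions~\ref{def:B}, \ref{def:A} and \ref{def:Y} shows that $|B(8(n+l))|$ dwarfs both $|B(4(n+l))|$ and $|A(8(n+l))|$ because of the iterated $B$/$A$/$Y$ blow-ups. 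The hard part will be exactly this size bookkeeping, though it is of the same flavor as the inequality already exploited inside Lemma~\ref{claim1}'s proof.

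Once the inequality is in hand, I close the argument by the same pigeonhole-plus-intermediate-value pattern used at the end of Lemma~\ref{claim1}'s proof. The at most $|r|$ edge-completion instants of $a$ partition $I$ into too few sub-intervals to separate every pair of consecutive starting instants of $b$'s $X(2(n+l),v_b)$-copies, so some such copy is executed in full while $a$ stays on a single edge $e$. Integrality of $X(2(n+l),v_b)$ forces $b$ to traverse $e$ completely during that window, and applying the intermediate value theorem to the continuous position functions of $a$ and $b$ along $e$ yields a meeting, contradicting the hypothesis that the agents have not met before $t_{\mathrm{end}}$. Therefore $a$ has completed $r$ by $t_{\mathrm{end}}$, as claimed.
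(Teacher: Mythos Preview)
Your proposal is correct and follows essentially the same approach as the paper's own proof: assume for contradiction that $a$ has not finished $r$ by the end of $I$, use Lemma~\ref{claim1} to confine $a$'s route during $I$ entirely to $r$, and then compare the number of integral copies of $X(2(n+l))$ executed by $b$ against the length of $r$ to force a meeting. The paper's write-up is terser---it bounds $|r|$ by $|B(2j)|+|A(4j)|$ directly and simply states that ``this would force a meeting'' without spelling out the pigeonhole/intermediate-value mechanism that you make explicit---but the underlying argument is identical.
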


\begin{proof}
Suppose not. Then, in view of Lemma \ref{claim1}, during the time interval when agent $b$ follows its $(2(n+l))$th fence, the trajectory of $a$ is included in {$M$}. However, according to {Definitions~\ref{def:K} and~\ref{def:ome}}, the number of integral trajectories $X(2(n+l))$ in $\Omega_b(2(n+l))$ is {at least} $2(|A(8(2(n+l)))|+|B(4(2(n+l)))|)$. Moreover, according to Algorithm RV-asynch-poly, the number of {edge traversals} in {$M$} is less than $|B(2j)| + |A(4j)|$. So, since $j\leq 2(n+l)$ in view of Lemma~\ref{claim1}, the number of integral trajectories in $\Omega_b(2(n+l))$ is larger than the number of {edge traversals} in {$M$}. This would force a meeting. 
\end{proof}

{
\begin{lemma}
\label{claim3}
Let $b$ be the first agent to complete its $(2(n+l))$th fence. If the agents have not met before, then by the time agent $b$ completes the first atom of its segment $S_{{1}}(2(n+l)+1)$, agent $a$ must have completed its $j$th fence $\Omega_a(j)$, where $j$ is the index of agent $a$.
\end{lemma}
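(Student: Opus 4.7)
The plan is to parallel the structure of the proofs of Lemmas~\ref{claim1} and~\ref{claim2}: use what Lemma~\ref{claim2} already gives us about agent $a$'s progress at the end of $b$'s $(2(n+l))$th fence, then force $a$ through its $j$th fence by a single application of Lemma~\ref{tunel} during the time $b$ follows the first atom of $S_{1}(2(n+l)+1)$. Let $t_1$ be the instant at which $b$ completes its $(2(n+l))$th fence and $t_2$ the instant at which $b$ completes the first atom $\alpha$ of $S_{1}(2(n+l)+1)$. By Lemma~\ref{claim2}, at time $t_1$ agent $a$ has already completed the last atom $r$ of $\mathcal{T}_a(j)$, so $a$ is either inside $\Omega_a(j)$ or has finished it. If $a$ has completed $\Omega_a(j)$ by time $t_1$, the claim holds, so we may assume $a$ is still inside $\Omega_a(j)$ at $t_1$.

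Next, I would argue by contradiction: suppose $a$ has not completed $\Omega_a(j)$ by $t_2$. Since $\Omega_a(j)=X(j,v_a)^{(2j-1)|K(j)|}$, during the whole interval $[t_1,t_2]$ agent $a$ keeps repeating the trajectory $X(j,v_a)$. The first atom $\alpha$ followed by $b$ during $[t_1,t_2]$ is, depending on the first bit of $b$'s modified label, either $B(4(n+l)+2,v_b)$ or $A(8(n+l)+4,v_b)$. The key step is to verify that in both cases $\alpha$ contains at least one entire copy of $X(j,v_a)$. This rests on the inequalities $4(n+l)+2\ge n$ (so the innermost Reingold trajectory at the outer level of $\alpha$ is integral and therefore passes through $v_a$) together with $j\le 2(n+l)\le 4(n+l)+2\le 8(n+l)+4$ (so the nested $Q$'s and $Y$'s inside $\alpha$ contain $X$ at level $j$).

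Concretely, in the first case $B(4(n+l)+2,v_b)$ contains $Y(4(n+l)+2,v_b)$; unrolling $Y'$ and using integrality of $R(4(n+l)+2,v_b)$ to find an index with $v_i=v_a$, we exhibit a sub-trajectory $Q(4(n+l)+2,v_a)\supseteq X(j,v_a)$. In the second case $A(8(n+l)+4,v_b)$ contains $Z(8(n+l)+4,v_a)$ by the same integrality argument, and $Z(8(n+l)+4,v_a)\supseteq Y(j,v_a)\supseteq Q(j,v_a)\supseteq X(j,v_a)$. Once the containment is established, Lemma~\ref{tunel} applied with the roles swapped (agent $a$ repeating $X(j,v_a)$ throughout $[t_1,t_2]$, agent $b$ completing one entire $X(j,v_a)$ inside $\alpha$) forces a meeting during $[t_1,t_2]$, contradicting the hypothesis that no meeting has occurred.

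The main obstacle is not the high-level strategy but the careful verification of the containment of $X(j,v_a)$ in $\alpha$: one has to peel off the outer repetition or reverse concatenation (of $B$ or $A$), then the intermediate layer ($Y$ or $Z$), and finally locate the correct starting node $v_a$ among the basepoints that appear in the nested construction. That $v_a$ does appear is guaranteed precisely because the outermost Reingold parameter in $\alpha$ is at least $n$; this is where the specific constants $2(n+l)+1$, $4(n+l)+2$ and $8(n+l)+4$ do their work, and this is why the algorithm uses $2k$ and $4k$ as multipliers inside each piece.
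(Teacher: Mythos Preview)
Your proposal is correct and follows essentially the same line as the paper's proof: assume for contradiction that $a$ has not finished $\Omega_a(j)$, use Lemma~\ref{claim2} to place $a$ inside $\Omega_a(j)$ (hence repeating $X(j,v_a)$) throughout the interval, observe that the first atom of $S_1(2(n+l)+1)$ contains a full copy of $X(j,v_a)$ because $j\le 2(n+l)$ and the outer Reingold parameter is at least $n$, and conclude via Lemma~\ref{tunel}. Your write-up is in fact more explicit than the paper's about why $X(j,v_a)$ sits inside both $B(4(n+l)+2,v_b)$ and $A(8(n+l)+4,v_b)$, tracing the nesting $B\supseteq Y\supseteq Q\supseteq X$ and $A\supseteq Z\supseteq Y\supseteq Q\supseteq X$; the paper compresses this into a single sentence.
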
}
{
\begin{proof}
Suppose not. Then, in view of Lemma \ref{claim2}, during the time interval when agent $b$ follows the first atom of its segment $S_{{1}}(2(n+l)+1)$, the trajectory of $a$ is included in $\Omega_a(j)$. Since the trajectory $\Omega_a(j)$ consists of repetitions of the trajectory $X(j)$ starting at the same node $v$, and while following the first atom of $S_{{1}}(2(n+l)+1)$ agent $b$ followed at least one trajectory $X(j,u)$ for any node $u$ of the graph (because $j< 2(n+l)+1$ by Lemma \ref{claim1}), this would force a meeting in view of Lemma \ref{tunel}.
\end{proof}}

\begin{lemma}
\label{lem:four}
Let $b$ be the first agent to complete its $(2(n+l))$th fence. {Let $t$ be the first time at which an agent finishes its $(2(n+l)+1)$th piece. If the agents do not meet by time $t$, then
the following properties hold, for $j$ denoting the index of agent $a$}.
\begin{itemize}
\item {\bf Property 1.} Let $t'$ be the time when agent $a$ completes 
a segment $S_{i}(j+1)$, if this segment exists. Let $t''$ be the time when agent $b$ completes 
the border $K_{{i,i+1}}(2(n+l)+1)$, if this border exists. Then $t'<t''$.
\item {\bf Property 2.} Let $t'$ be the time when agent $b$ completes a segment $S_{i}(2(n+l)+1)$, if this segment exists.
Let $t''$ be the time when agent $a$ completes 
the border $K_{{i,i+1}}(j+1)$, if this border exists. Then $t'<t''$.
\item {\bf Property 3.}
 Let $t'$ be the time when agent $a$ completes 
a border $K_{{i,i+1}}(j+1)$, if this border exists. Let $t''$ be the time when agent $b$ completes 
the first atom of the segment $S_{{i+1}}(2(n+l)+1)$, if this segment exists. Then $t'<t''$.
 \item {\bf Property 4.}
Let $t'$ be the time when agent $b$ completes a border $K_{{i,i+1}}(2(n+l)+1)$, if this border exists.
Let $t''$ be the time when agent $a$ completes 
the first atom of the segment $S_{{i+1}}(j+1)$, if this segment exists. Then $t'<t''$. 
\end{itemize}
\end{lemma}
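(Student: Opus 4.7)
My plan is to prove all four properties jointly by induction on $i$, starting from the synchronization provided by Lemma~\ref{claim3} and repeatedly applying Lemma~\ref{tunel} inside intervals where one agent is following a border. The idea is that whenever an agent is inside a border, it is repeating a trajectory $X(k,v)$, while the other agent's counterpart segment (or the first atom thereof) is a large enough structure to contain, at every node of~$G$, a full copy of this $X(k,v)$; the carefully chosen exponents in Definitions~\ref{def:B}, \ref{def:K}, and~\ref{def:ome} ensure that the counting comes out in our favor, so Lemma~\ref{tunel} is triggered.

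For the base case $i=1$, Lemma~\ref{claim3} tells us that when $b$ finishes the first atom of $S_1(2(n+l)+1)$, agent $a$ has completed its $j$th fence and is therefore positioned at the start of $S_1(j+1)$. I would then prove Property~1 by contradiction: assume $t''\le t'$, so that throughout the interval $I$ during which $b$ traverses $K_{1,2}(2(n+l)+1)$, agent $b$ keeps repeating $X(2(n+l)+1,v_b)$ while~$a$ is confined to $S_1(j+1)$. The segment $S_1(j+1)$ is either $B(2(j+1),v_a)^2$ or $A(4(j+1),v_a)^2$, and unrolling Definitions~\ref{def:B}, \ref{def:Y}, \ref{def:Q} (resp.\ \ref{def:A}, \ref{def:Z}) shows that, since $2(j+1)\ge n$ and $2(j+1)\ge 2(n+l)+1$, in either case $S_1(j+1)$ contains a full copy of $Q(2(j+1),u)$ at every node $u$ of $G$, hence in particular contains $X(2(n+l)+1,v_b)$. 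A counting argument in the spirit of the proof of Lemma~\ref{claim2} shows that the number of $X(2(n+l)+1,v_b)$-repetitions inside $K_{1,2}(2(n+l)+1)$ strictly exceeds the number of nodes in $S_1(j+1)$, which forces $a$ to complete at least one full $X(2(n+l)+1,v_b)$ during $I$; Lemma~\ref{tunel} then yields a meeting, contradicting the standing assumption.

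Properties~2, 3 and~4 for $i=1$ are proved by the same template, with the roles of $a$ and $b$ interchanged where appropriate and with $S_i$ replaced by its first atom in Properties~3 and~4. The inductive step $i\to i+1$ is essentially a replay: Property~4 at level~$i$ plays exactly the role that Lemma~\ref{claim3} plays in the base case, certifying that when~$b$ finishes $K_{i,i+1}(2(n+l)+1)$, agent~$a$ has entered $S_{i+1}(j+1)$ without yet completing its first atom, so that the four base-case arguments replay verbatim with every index shifted by one.

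The main obstacle is the bookkeeping: one must check, level by level and in each of the two cases (bit~$b_i=0$ or~$1$, selecting an $A$- or $B$-type segment), that the number of $X$-repetitions inside the relevant border or first atom strictly exceeds the number of nodes in whichever trajectory fragment the other agent can be confined to during the interval in question. The parameter bounds $n+l+1\le j\le 2(n+l)$ from Lemma~\ref{claim1}, combined with the exponents appearing in Definitions~\ref{def:B}, \ref{def:K} and~\ref{def:ome}, are exactly what make all these counting inequalities hold uniformly for every~$i$ such that the relevant segments and borders exist.
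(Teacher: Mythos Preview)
Your overall architecture matches the paper's: the paper also proves the four properties simultaneously by a minimal-counterexample argument on the index $i$, using Lemma~\ref{claim3} (and Lemma~\ref{claim2}) as the anchor for $i=1$ and the properties at level $i-1$ as the anchor for level $i$. So the skeleton is right.

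Where your write-up goes astray is in treating all four cases by ``the same template.'' The paper uses two \emph{different} mechanisms, and your proposal blurs them. For Properties~1 and~2 the argument is pure counting: the agent that traverses the \emph{entire} border $K$ performs more integral copies of $X$ than there are nodes in the other agent's segment, so a meeting is forced directly by the pushing principle. Your chain ``counting $\Rightarrow$ $a$ completes one full $X(2(n+l)+1,v_b)$ $\Rightarrow$ Lemma~\ref{tunel}'' is a detour: the middle step is not a correct consequence of the counting (the counting, together with confinement, already yields the contradiction), and the Lemma~\ref{tunel} step is never reached. For Properties~3 and~4 the situation is reversed: here the agent confined to the border may traverse only a tiny portion of it, so no counting bound on its side is available, and a counting bound on the other side fails because $|K(j+1)|$ is larger than the number of integral trajectories in a single atom. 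The correct argument---which your opening paragraph does describe---is a direct application of Lemma~\ref{tunel}: the atom $A(4(\cdot))$ or $B(2(\cdot))$ contains a full copy of $X(k,u)$ at every node $u$, while the agent in the border is repeating $X(k,v)$ for one fixed $v$, so Lemma~\ref{tunel} fires. Your summary sentence about checking ``the number of $X$-repetitions inside the relevant border or first atom'' is therefore wrong for Properties~3 and~4.

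One smaller point: in the inductive step it is Property~3 at level $i$ (not Property~4) that plays the role of Lemma~\ref{claim3} for establishing Property~1 at level $i+1$; Property~4 at level $i$ is what feeds Property~2 at level $i+1$. The dependencies are $3_{i}\to 1_{i+1}$, $4_{i}\to 2_{i+1}$, $1_i\to 3_i$, $2_i\to 4_i$, exactly as in the paper's four cases.
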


\begin{proof}
{Assume that the agents do not meet by time $t$.}
Suppose, for contradiction, that at least one of the above 4 properties is not satisfied and let $\mu$ be the smallest value of the index $i$ for which one of these properties is not satisfied. We consider 4 cases.

{\bf Case~1.} Property~1 is false for $i=\mu$. This implies that $b$ completed its border $K_{{\mu,\mu+1}}(2(n+l)+1)$ before $a$ completed $S_{\mu}(j+1)$. Hence agent $b$ has completed $K_{{\mu,\mu+1}}(2(n+l)+1)$ while agent $a$ was following 
$S_{\mu}(j+1)$. Indeed, if agent $b$ completed $K_{{\mu,\mu+1}}(2(n+l)+1)$ before agent $a$ started $S_{\mu}(j+1)$, this would imply:
\begin{itemize}

\item if $\mu>1$ then agent $b$ started $K_{{\mu,\mu+1}}(2(n+l)+1)$ before agent $a$ has completed $K_{{\mu-1,\mu}}(j+1)$. Hence $b$ had completed $S_{\mu}(2(n+l)+1)$ before $a$ completed $K_{{\mu-1,\mu}}(j+1)$. This would imply that Property~3 is not satisfied for $\mu-1$, which contradicts the definition of $\mu$. 

\item if $\mu=1$ then agent $b$ started $K_{{1,2}}(2(n+l)+1)$ before $a$ has completed its $j$th fence $\Omega(j)$. {This is a contradiction with Lemma \ref{claim3}}


\end{itemize}

Hence agent $b$ has completed $K_{{\mu,\mu+1}}(2(n+l)+1)$ while agent $a$ was following 
$S_{\mu}(j+1)$. Similarly as before, agent $b$ has also started following $K_{{\mu,\mu+1}}(2(n+l)+1)$ while agent $a$ was following $S_{\mu}(j+1)$. 

Hence agent $b$ has followed the entire trajectory $K_{{\mu,\mu+1}}(2(n+l)+1)$ while $a$ was following $S_{\mu}(j+1)$. 
However, by Definition~\ref{def:K}, the trajectory $K_{{\mu,\mu+1}}(2(n+l)+1)$ contains $2(|B(4(2(n+l)+1))|+|A(8(2(n+l)+1))|)$ integral trajectories $X(2(n+l)+1)$. Moreover, according to Algorithm RV-asynch-poly, the number of {edge traversals} in trajectory $S_{\mu}(j+1)$ is equal to $2(|B(2(j+1))|+|A(4(j+1)|)$ which is at most $2(|B(2(2(n+l)+1))|+|A(4(2(n+l)+1))|)$ (recall that $j\leq 2(n+l)$ by Lemma \ref{claim1}). Thus, this would force a meeting because the number of integral trajectories $X(2(n+l)+1)$ in $K_{{\mu,\mu+1}}(2(n+l)+1)$ is larger than the number of {edge traversals} in $S_{\mu}(j+1)$, which is a contradiction.

{\bf Case~2.} Property 2 is false for $i=\mu$. This implies that agent $a$ completed $K_{{\mu,\mu+1}}(j+1)$ before agent $b$ completed $S_{\mu}(2(n+l)+1)$. Hence agent $a$ completed $K_{{\mu,\mu+1}}(j+1)$ while agent $b$ was following 
$S_{\mu}(2(n+l)+1)$. Indeed, if agent $a$ completed $K_{{\mu,\mu+1}}(j+1)$ before $b$ started $S_{\mu}(2(n+l)+1)$, this would imply:
\begin{itemize}
\item if $\mu>1$ then agent $a$ started $K_{{\mu,\mu+1}}(j+1)$ before agent $b$ completed $K_{{\mu-1,\mu}}(2(n+l)+1)$. Hence $a$ had completed $S_{\mu}(j+1)$ before $b$ completed $K_{{\mu-1,\mu}}(2(n+l)+1)$. This would imply that Property~4 is not satisfied for $\mu-1$, which contradicts the definition of $\mu$. 

\item if $\mu=1$ then agent $a$ started $K_{{1,2}}(j+1)$ before $b$ completed its $(2(n+l))$th fence  $\Omega(2(n+l))$ which contradicts Lemma \ref{claim2}.
\end{itemize}
Hence agent $a$ completed $K_{{\mu,\mu+1}}(j+1)$ while $b$ was following $S_{\mu}(2(n+l)+1)$. Similarly as before, agent $a$ started $K_{{\mu,\mu+1}}(j+1)$ while $b$ was
following $S_{\mu}(2(n+l)+1)$.

Hence agent $a$ has followed the entire trajectory $K_{{\mu,\mu+1}}(j+1)$ while $b$ was following $S_{\mu}(2(n+l)+1)$. 
However, by Definition~\ref{def:K}, the number of integral trajectories $X(j+1)$ (note that $X(j+1)$ is integral because $j\ge n+l+1$ in view of Lemma \ref{claim1}) in $K_{{\mu,\mu+1}}(j+1)$ is $2(|A(8(j+1))|+|B(4(j+1))|)$ which is at least $2(|A(8(n+l+2))|+|B(4(n+l+2)|)$ because $j\ge n+l+1$ in view of Lemma \ref{claim1}. Moreover, according to Algorithm RV-asynch-poly, the number of {edge traversals} in $S_{\mu}(2(n+l)+1)$ is less than $2(|A(4(2(n+l)+1))|+|B(2(2(n+l)+1))|)=2(|A(8(n+l)+4))|+|B(4(n+l)+2))|)$. Thus, this would force a meeting because the number of integral trajectories $X(j+1)$ in $K_{{\mu,\mu+1}}(j+1)$ is larger than the number of {edge traversals} in $S_{\mu}(2(n+l)+1)$, which is a contradiction.

{\bf Case~3.} Property~3 is false for $i=\mu$. This implies that agent $b$ completed the first atom of  $S_{{\mu+1}}(2(n+l)+1)$ before agent $a$ completed $K_{{\mu,\mu+1}}(j+1)$. This implies that agent $b$ completed the first atom of  $S_{{\mu+1}}(2(n+l)+1)$ while $a$ was following $K_{{\mu,\mu+1}}(j+1)$. Indeed, otherwise agent $b$ would have completed $K_{{\mu,\mu+1}}(2(n+l)+1)$ before $a$ completed $S_{{\mu}}(j+1)$ which would imply that Property 1 is false for $\mu$. This
is impossible by Case 1.

Hence agent $b$ completed the first atom of $S_{{\mu+1}}(2(n+l)+1)$ while $a$ was following $K_{{\mu,\mu+1}}(j+1)$.  For the same reasons agent $b$ also started the first atom of $S_{{\mu+1}}(2(n+l)+1)$ while $a$ was following $K_{{\mu,\mu+1}}(j+1)$.  Then while $a$ was following $K_{{\mu,\mu+1}}(j+1)$,  agent $b$ either followed entirely the trajectory $A(8(n+l)+4)$ or followed entirely the trajectory $B(4(n+l)+2)$.  Consequently, in view of Definitions ~\ref{def:A} and~\ref{def:B}, agent $b$ must have followed trajectory $X(j+1,u)$ for every node $u$ of the graph at least once  (because $j\le 2(n+l)$ by Lemma \ref{claim1}). Since $K_{{\mu,\mu+1}}(j+1)$ consists of repeating $X(j+1,v)$ for the same node $v$,  agents would meet in view of Lemma \ref{tunel}, which is a contradiction.

{\bf Case~4.} Property~4 is false for $i=\mu$. This implies that agent $a$ completed the first atom of  $S_{{\mu+1}}(j+1)$ before agent $b$ completed 
$K_{{\mu,\mu+1}}(2(n+l)+1)$. This implies that agent $a$ completed the first atom of  $S_{{\mu+1}}(j+1)$ while $b$ was following $K_{{\mu,\mu+1}}(2(n+l)+1)$. Indeed, otherwise agent $a$ would have completed $K_{{\mu,\mu+1}}(j+1)$before agent $b$ completed $S_{{\mu}}(2(n+l)+1)$ which would imply that Property 2 is false for $\mu$. This
is impossible by Case 2.

Hence agent $a$ completed the first atom of $S_{{\mu+1}}(j+1)$ while $b$ was following  $K_{{\mu,\mu+1}}(2(n+l)+1)$. For the same reasons agent $a$ also started the first atom of  $S_{{\mu+1}}(j+1)$ while $b$ was following $K_{{\mu,\mu+1}}(2(n+l)+1)$.  Then while $b$ was following
$K_{{\mu,\mu+1}}(2(n+l)+1)$, agent $a$ either followed entirely the trajectory $A(4(j+1))$ or followed entirely the trajectory $B(2(j+1))$.  Consequently,  in view of Definitions ~\ref{def:A} and~\ref{def:B},  agent $a$ must have followed trajectory $X(2(n+l)+1,u)$ for every node $u$ of the graph at least once (because  $j\ge n+l+1$ by Lemma \ref{claim1}). Since $K_{{\mu,\mu+1}}(2(n+l)+1)$ consists of repeating $X(2(n+l)+1,v)$ for the same node $v$,  agents would meet in view of Lemma \ref{tunel}, which is a contradiction.
\end{proof}

\begin{theorem}\label{main}
There exists a polynomial $\Pi(x,y)$, non decreasing in each variable, such that
if two agents with labels $L_1$ and $L_2$ execute Algorithm RV-asynch-poly in a graph of size $n$, then their meeting is guaranteed by the time one of them performs $\Pi(n,\min(|L_1|,|L_2|))$ edge traversals. 
\end{theorem}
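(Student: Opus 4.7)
The plan is to combine Lemma \ref{lem:four} (which aligns the two agents' executions segment-by-segment) with the key prefix-freeness property of the map $M$: because $M$ doubles every bit of the input label and appends the suffix ``01'', no modified label is a prefix of another; hence the two modified labels $M_a$ and $M_b$ must disagree at some position $i^\star\le l$, where $l$ is the length of the shorter of them (and is linear in $\min(|L_1|,|L_2|)$). I will show that a meeting is forced before the first of the two agents completes its $(2(n+l)+1)$th piece, which will suffice since the total number of edge traversals up to that moment is polynomial in $n$ and $l$.

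Suppose, for contradiction, that no rendezvous occurs by that time, and without loss of generality let $b$ be the first agent to finish its $(2(n+l))$th fence, with $j\in\{n+l+1,\ldots,2(n+l)\}$ its matching index from Lemma \ref{claim1}. Applying Properties 1--4 of Lemma \ref{lem:four} inductively from $i=1$ up to $i=i^\star$, I conclude that the segments $S_{i^\star}(j+1)$ of agent $a$ and $S_{i^\star}(2(n+l)+1)$ of agent $b$ share a non-trivial time overlap: each agent begins its $i^\star$th segment before the other completes the enclosing border, and finishes it before the other finishes the subsequent border. At position $i^\star$ the corresponding bits of $M_a$ and $M_b$ differ, so one of these two segments is $B(2k,v)^2$ (a long repetition of $Y(2k,v)$ at a fixed node $v$, by Definition \ref{def:B}), and the other is $A(4k',u)^2$, which by Definitions \ref{def:A} and \ref{def:Z} contains a copy of $Y(k'',u')$ for every node $u'$ and every $k''\le 4k'$. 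Since $j+1\le 2(n+l)+1$ gives $2k\le 4k'$, within the overlap the ``$A$''-agent's trajectory contains an entire $Y(2k,v)$ starting at the fixed node $v$ where the ``$B$''-agent is repeating, and Lemma \ref{tunel} then forces a rendezvous, contradicting our assumption.

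The main obstacle is the second step: establishing that the two agents' $i^\star$th segments really do overlap in time despite the mismatch between their piece-parameters $j+1$ and $2(n+l)+1$. Most of this work is already done in Lemma \ref{lem:four}: the borders $K_{i,i+1}(\cdot)$ serve as buffers whose many repetitions of $X$ (via Definition \ref{def:K}) dominate everything the other agent can do in a single segment, and the fences $\Omega(\cdot)$ (Definition \ref{def:ome}) dominate what the other agent can do in an entire piece. The same pigeonhole and length-comparison arguments used in Cases 1--4 of that lemma extend in the obvious way to the bit-differing segment, once the exponents in Definitions \ref{def:B}, \ref{def:K}, and \ref{def:ome} are fixed as in the algorithm.

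For the cost bound I would simply unfold the definitions: by a straightforward induction on the nesting depth, each of $|X(k)|$, $|Q(k)|$, $|Y(k)|$, $|Z(k)|$, $|A(k)|$, $|B(k)|$, $|K(k)|$, $|\Omega(k)|$ is polynomial in $k$ (the Reingold polynomial $P$ enters only a bounded number of times). Hence the number of edge traversals performed by either agent up to the end of its $(2(n+l)+1)$th piece is bounded by some polynomial $\Pi(n,l)$, non-decreasing in each variable. Since $l=2\min(|L_1|,|L_2|)+2$, this yields the desired bound $\Pi(n,\min(|L_1|,|L_2|))$.
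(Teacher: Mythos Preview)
Your overall architecture matches the paper's: use the prefix-freeness of $M(\cdot)$ to find a bit position $i^\star\le l$ where the modified labels differ, use Properties~3--4 of Lemma~\ref{lem:four} at $i=i^\star-1$ to force both agents to be inside their $i^\star$th segments simultaneously, and then argue that the $B$-segment (repetitions of a single $Y$) versus the $A$-segment (containing every relevant $Y$) forces a meeting. The polynomial cost estimate is also handled the same way.

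There is, however, a real gap in your meeting argument. From the overlap you only know that \emph{one} of the two agents completes an entire atom while the other is still inside its $i^\star$th segment (concretely: whoever finishes the segment first has its \emph{second} atom entirely contained in the other's segment interval---this is exactly why each segment is a square). You treat only the sub-case where the ``$A$''-agent is the one to complete an atom: then its atom $A(4k',u)$ indeed contains $Y(2k,v)$, and Lemma~\ref{tunel} applies against the ``$B$''-agent's repetition of $Y(2k,v)$. But the adversary may equally well arrange that the ``$B$''-agent completes an atom $B(2k,v)$ while the ``$A$''-agent is still inside $A(4k',u)^2$. In that sub-case Lemma~\ref{tunel} is \emph{not} available, because $A(4k',u)^2$ is not a repetition of a single $Y$ from a single node. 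The paper closes this case with a counting argument: by Definition~\ref{def:B}, one copy of $B(2k,v)$ contains $2|A(4k)|$ integral trajectories $Y(2k)$, and the parameter choices ($4k\ge 8(n+l+2)$ versus the opponent's $4k'\le 8(n+l)+4$) guarantee this exceeds the total number of nodes in the opponent's entire segment $A(4k')^2$, so a meeting is forced by pigeonhole. You need to add this second sub-case.

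A minor point: your inequality ``$j+1\le 2(n+l)+1$ gives $2k\le 4k'$'' only justifies one of the two bit-assignments (when the $B$-agent is $a$, in piece $j+1$). When the $B$-agent is $b$, in piece $2(n+l)+1$, you instead need the \emph{lower} bound $j\ge n+l+1$ from Lemma~\ref{claim1} to get $4(n+l)+2\le 4(j+1)$.
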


\begin{proof}
Let $m=\min(|L_1|,|L_2|)$. Let $a$ be the agent with label $L_1$ and let $b$ be the agent with label $L_2$.
Let $M_a$ be the modified label of agent $a$ and let $M_b$ be the modified label of agent $b$. Let $l$ be the length of the shorter of labels $M_a$,  $M_b$.
Hence $l=2m+2$.
As observed before, the modified label of one agent cannot be a prefix of the modified label of the other. Hence there exists an integer {$l \geq \lambda > 1$}, such that 
the $\lambda$th bit of $M_a$ is different from the $\lambda$th bit of $M_b$.
{Let $t$ be the first time at which an agent finishes its $(2(n+l)+1)$th piece.
By Lemma \ref{lem:four},  if the agents have not met by time $t$,} then one of them cannot have completed the first atom of  $S_\lambda(k_1)$ as long as the other agent has not completed $K_{\lambda-1,\lambda}(k_2)$ {(i.e. started $S_\lambda(k_2)$)}, {for some $2(n+l)+1\ge k_1,k_2 \ge n+l+2$. (Since $k_1,k_2 \ge n+l+2$ and $l \geq \lambda > 1$, these objects must exist.)}

First suppose that the $\lambda$th bit of $M_a$ is 1. There are two possible cases.

\begin{itemize}
\item agent $a$ follows the entire trajectory $B(2(j+1))$ while agent $b$ is following $S_{\lambda}(2(n+l)+1)=A(8(n+l)+4)^2$.

Since $j\ge n+l+1$, by Definition~\ref{def:B} the trajectory $B(2(j+1))$ contains $2|A(8j+8)|\ge 2|A(8(n+l+1)+8)|$ 
integral trajectories $Y(2(j+1))$. Moreover, according to Algorithm RV-asynch-poly, the number of {edge traversals} in $S_{\lambda}(2(n+l)+1)$ is $2|A(8(n+l)+4)|$. So, the trajectory $B(2(j+1))$ contains more integral trajectories $Y(2(j+1))$ than there are {edge traversals} in $S_{\lambda}(2(n+l)+1)$, hence there is a meeting.

\item agent $b$ follows the entire trajectory $A(4(2(n+l)+1))$ while agent $a$ is following  $S_{\lambda}(j+1)= B(2(j+1))^2$. 

The trajectory $S_{\lambda}(j+1)$ consists of repetitions of $Y(2(j+1),v)$ for some node $v$. 
Since by Lemma \ref{claim1}, $j\le 2(n+l)$, the trajectory $A(4(2(n+l)+1))$, contains $Y(2(j+1),u)$ for every node $u$ of the graph, which implies a meeting by Lemma \ref{tunel}.
\end{itemize}

Next suppose that the $\lambda$th bit of $M_a$ is 0. There are two possible cases.

\begin{itemize}

\item agent $a$ follows the entire trajectory $A(4(j+1))$ while agent $b$ is following $S_{\lambda}(2(n+l)+1)=B(2(2(n+l)+1))^2$. 

The trajectory $S_{\lambda}(2(n+l)+1)$ consists of repetitions of {$Y(4(n+l)+2,v)$} for some node $v$.  Since by Lemma \ref{claim1}, $j\ge n+l+1$, the trajectory
$A(4(j+1))$, contains $Y(4(n+l)+2,u)$  for every node $u$ of the graph, which implies a meeting by Lemma \ref{tunel}.

\item agent $b$ follows the entire trajectory $B(2(2(n+l)+1))$ while agent $a$ is following $S_{\lambda}(j+1)=A(4(j+1))^2$. 

By Definition~\ref{def:B} the trajectory $B(2(2(n+l)+1))$ contains $2|A(16(n+l)+8)|$ integral trajectories $Y(2(2(n+l)+1))$. Moreover, since $j\le 2(n+l)$, the number of {edge traversals} in $S_{\lambda}(j+1)$ is $2|A(4(j+1))|$ i.e., at most $2|A(8(n+l)+4)|$. So, the number of integral trajectories $Y(2(2(n+l)+1))$ in $B(2(2(n+l)+1))$ is larger than the number of {edge traversals} in $S_{\lambda}(j+1)$, hence there is a meeting.
\end{itemize}

{Hence in all cases agents meet by the time when the first of the agents completes its $(2(n+l)+1)$th piece}. Now the proof can be completed by the following estimates which are a consequence of the formulation of the algorithm and of the definitions of respective trajectories.
 (Recall that $P$ is the
polynomial describing the number of edge traversals in the trajectory obtained by Reingold's procedure.)

For any $v$, $|X(k,v)| \leq X^*_k=2P(k)+1$.

For any $v$, $|Q(k,v)| \leq Q^*_k= \sum_{i=1}^k X^*_i$.

For any $v$, $|Y(k,v)| \leq Y^*_k=2P(k) \cdot Q^*_k$.

For any $v$, $|Z(k,v)| \leq Z^*_k= \sum_{i=1}^k Y^*_i$.

For any $v$, $|A(k,v)| \leq A^*_k=2P(k) \cdot Z^*_k$.

{For any $v$, $|B(k,v)| \leq B^*_k=2A^*_{4k} \cdot Y^*_k$.}

{For any $v$, $|K(k,v)| \leq K^*_k=2(B^*_{4k}+A^*_{8k}) \cdot X^*_k$.}

For any $v$, $|\Omega(k,v)| \leq \Omega^*_k=(2k-1)K^*_{k} \cdot X^*_k$.

For every integer $k>0$, let $T^*_k$ denote the number of nodes in a piece in iteration $k$ of the repeat loop in Algorithm RV-asynch-poly. Let $N=2(n+l)+1$. Recall that $l=2m+2$. 
 We have $T_k^* \le  N(2A_{4k}^* + 2B_{2k}^* + K_k^*)$.
 {For any agent, the length of the trajectory it follows by the time it
 completes the $(2(n+l)+1)$th piece is at most $\sum_{k=1}^N(T^*_k+\Omega^*_k)$.} Let $\Pi(n,m)=\sum_{k=1}^N(T^*_k+\Omega^*_k)$.
 It follows from the above discussion that agents must meet by the time one of them performs $\Pi(n,m)$ edge traversals.
 Since $T^*_k$ and $\Omega^*_k$ are polynomials in $k$, while $N$ and $l$ are polynomials in $n$ and $m$, the function $\Pi(n,m)$ is a polynomial.
Since the polynomial $P(k)$ is non-decreasing, $\Pi$ is non-decreasing in each variable. This completes the proof.
\end{proof}

\section{Applications: solving problems for multiple asynchronous agents}

In this section we apply our polynomial-cost rendezvous algorithm for asynchronous agents to solve four basic distributed problems
involving multiple asynchronous agents in unknown networks. Agents solve these problems by exchanging information during their meetings.
The scenario for all the problems is the following. There is a team of $k>1$ agents having distinct integer labels, located at different nodes of an unknown network. 
The adversary wakes up some of the agents at possibly different times. A dormant agent is also woken up by an agent that visits its starting node, if such an agent exists.
As before, each agent knows a priori only its own label. Agents do not know the size of the team and, as before, have no
a priori knowledge concerning the network.
The assumptions concerning the movements of agents remain unchanged. We only need to add a provision in the model specifying what happens when agents meet. (For rendezvous, this was the end of the process.) This addition is very simple. 
When (two or more)  agents meet, they notice this fact and can exchange all previously acquired information. 
However, if the meeting is inside an edge,
they continue the walk prescribed by the adversary until reaching the other end of the current edge. New knowledge acquired at the meeting
can then influence the choice of the subsequent part of the routes constructed by each of the agents. It should be noted that the possibility
of exchanging all current information at a meeting is formulated only for simplicity. In fact, during a meeting, our algorithm prescribes the exchange of only at most
$k$ labels of other agents that the meeting agents have already heard of, their initial values in the case of the gossiping problem, and a constant number of control bits.

We now specify the four problems that we want to solve:
\begin{itemize}
\item
{\tt team size}: every agent has to output the total
number $k$ of agents; 
\item{\tt leader election}: all agents have to output the label of a single agent, called the leader;
\item
{\tt perfect renaming}: all agents
have to adopt new different labels from the set $\{1,\dots,k\}$, where $k$ is the number of agents;  
\item
{\tt gossiping}:  each agent has initially a piece of information (value) and all agents have to output all the values; thus agents have to exchange
all their initial information.
\end{itemize} 
The cost of a solution of each of the above problems is the total number of edge traversals by all agents until they output the solution.
Using our rendezvous algorithm we solve all these problems at cost polynomial in the size of the graph and in the smallest length of all labels of participating agents. 

Let us first note that accomplishing all the above tasks is a consequence of solving the following problem: 
at some point each agent acquires the labels of all the agents {\em and is aware} of this fact. We call this more general problem Strong Global Learning (SGL), where the word ``strong''  emphasizes awareness of the agents that learning is accomplished.\footnote{Notice that the assumption that the number $k$ of agents is larger than 1 is necessary. For a single agent neither SGL nor any of the above mentioned problems can be solved. Indeed,
for example in an oriented ring of unknown size (ports 0,1 at all nodes in the clockwise direction), a single agent cannot realize that it is alone.}
Indeed, if each agent gets the labels of all the agents {\em and is aware} of it, then each agent can count all agents, thus solving
{\tt team size}, each agent can output the smallest label as that of the leader, thus solving {\tt leader election},  each agent can adopt the new label
$i$ if its original label was $i$th in increasing order among all labels, 
thus solving {\tt perfect renaming}, and each agent can output all initial values,
thus solving {\tt gossiping},  if we append in the algorithm for SGL the initial value to the label of each agent.

Hence it is enough to give an algorithm for the SGL problem, working at cost polynomial in the size of the graph 
and in the smallest length of all labels of participating agents. This is the aim of the present section. Notice that this automatically solves all
distributed problems that depend only on acquiring by all agents the knowledge of all labels and {\em being aware of this fact}.
(The above four problems are in this class.) We stress this latter requirement, because it is of crucial importance. Note, for example, that none of the above four problems can be solved even if agents
eventually learn all
labels but are never aware of the fact that no other agents are in the network. 
This detection requirement is non-trivial to achieve: recall that agents have 
no a priori bound on the size of the graph or on the size of the team.

We now describe Algorithm SGL solving the SGL problem at cost polynomial in the size of the graph 
and in the smallest length of all labels of participating agents.  In this description we will use procedure RV-ASYNCH-POLY$(L)$ to denote Algorithm RV-asynch-poly
as executed by an agent with label $L$. 

\vspace*{0.3cm}

\noindent
{\bf Algorithm SGL}

{We will define three states in which an agent can be. These states are {\em traveller}, {\em explorer}
and {\em ghost}. Transitions between states depend on the history of the agent, and more specifically on comparing the labels exchanged during meetings.} 

{The high-level idea of the algorithm is the following. An agent $a$ with label $L$ wakes up in state {\em traveller} and executes procedure RV-ASYNCH-POLY$(L)$ until the first meeting when it meets either agents that are not in state {\em explorer}, or agents having heard of some label smaller than $L$ (below we explain what ''having heard of'' exactly means via the notion of {\em bag}).}
{Then, depending on the comparison of labels of the agents it meets or the labels that have been heard of by the agents it meets, it transits either to state {\em ghost} or to state {\em explorer}. In the first case it terminates its current move and stays idle.
In the second case it simulates procedure $ESST$ by using an agent in state {\em ghost} as token and learns a polynomial upper bound $E(n)$ on the size $n$ of the graph. Then it resumes procedure RV-ASYNCH-POLY$(L)$, from where it interrupted it when leaving state {\em traveller}, until it performes the $\Pi(E(n),L)$ edge traversals of RV-ASYNCH-POLY$(L)$ or it hears of another agent having a smaller label than $L$.}

{If the agent is informed about the existence of a label smaller than $L$ before executing the $\Pi(E(n),L)$ edge traversals of RV-ASYNCH-POLY$(L)$, it switches to state {\em ghost}: in fact we will prove that this kind of situation occurs for all explorers having a label different from $M$ (where $M$ is the smallest label among the participating agents) and after at most a number of edge traversals polynomial in $n$ and $|M|$.} 

{Otherwise, the agent ends up executing the $\Pi(E(n),L)$ edge traversals of RV-ASYNCH-POLY$(L)$ (we will show that this occurs only when $L=M$).
At this point, there are no longer agents in state {\em traveller} and an execution of Reingold's procedure
followed by a complete backtrack of the trajectory resulting from this procedure permits the agent to learn all labels of participating agents as well as to convey this knowledge to all agents in state {\em ghost}.
All other agents will in turn get this knowledge from these agents.}

Below we specify what an agent $a$ with label $L$
does in each state and how it transits from state to state. Each agent has a set variable $W$, called its {\em bag}, initialized to $\{L\}$, where $L$ is its label. At each point of the execution of the algorithm  the value of the bag is the set of labels of agents that $a$ has been informed about {(the bag of an agent is the set of labels it has heard of)}. More precisely, during any meeting of $a$ with agents whose current values of their bags are $W_1,W_2,\dots, W_i$,
respectively, agent $a$ sets the value of its bag $W$ to $W\cup W_1 \cup W_2 \cup \dots \cup W_i$.  Notice that since each bag can be only incremented, the number of updates of each bag
is at most $k-1$, where $k$ is the number of agents.

\vspace*{0.3cm}

\noindent
State {\em traveller}.

{The agent $a$ wakes up in this state and starts executing procedure RV-ASYNCH-POLY$(L)$  until the first meeting. Suppose the first meeting is with a set $Z$ of agents ($a\notin Z$). If there is an agent in $Z$ having a bag which includes a value smaller than $L$ then agent $a$ transits to state {\em ghost}.} 

{Otherwise, if $Z$ contains an agent in state
{\em ghost} or {\em traveller}, then agent $a$ transits to state {\em explorer} and the smallest agent in set $Z$ which is not an explorer, say agent $b$, will play the role of the token of agent $a$ in order to simulate procedure $ESST$ (refer to state {\em explorer}). Note that if agent $b$ is not in state {\em ghost} when it meets agent $a$ then its transits to this state while $a$ transits to state {\em explorer}}.

{In all the other cases, agent $a$ remains in state {\em traveller} and continues executing procedure RV-ASYNCH-POLY$(L)$ until the next meeting.}

\vspace*{0.3cm}

\noindent
State {\em ghost}.

{Agent $a$ completes the traversal of the current edge and remains idle at its extremity forever.
As soon as it gets the information (from some agent in state {\em explorer}) that its current bag contains all labels of participating agents, agent $a$ outputs the value of its bag.}

\vspace*{0.3cm}

\noindent
State {\em explorer}.

{When agent $a$ transits to this state, it has just met an agent $b$ in state {\em ghost} (or which has just transited to state {\em ghost}), that $a$ considers as its token.
The actions of agent $a$ are divided into three phases. If agent $a$ transited to state {\em explorer} while traversing an edge, Phase~1 starts as soon as this edge traversal is done. Otherwise, Phase~1 starts immediately.}

\noindent \underline{Phase 1}.

{If agent $a$ transited to state {\em explorer} from a node $v$, agent $a$ performs procedure $ESST$ with its token which stays idle at $v$ on the extended edge $u-v$ (where $u$ is some node adjacent to $v$). Otherwise, agent $a$ transited to state {\em explorer} while traversing an edge $u-v$ from $u$ to $v$. In this latter case, agent $a$ also performs procedure $ESST$ with its token located on the extended edge $u-v$ (according to state {\em ghost}, the token remains on this extended edge forever)}.

{After completing Phase~1, agent $a$ has visited all the nodes of the graph and it knows a polynomial upper bound on the size $n$ of the network: this upper bound, denoted $E(n)$, is the cost of the entire execution of $ESST$ previously made by agent $a$ (refer to Theorem~\ref{theo:est}).} 

\noindent  \underline{Phase 2}.

{Let $T$ be the trajectory made by the agents during Phase~1. Agent $a$ backtracks to node $v$ using the trajectory $\overline{T}$. Then, knowing the polynomial upperbound $E(n)$ on the size of the graph, agent $a$ resumes the execution of procedure RV-ASYNCH-POLY$(L)$ (from where it interrupted it when transiting from state {\em traveller} to state {\em explorer}) and executes it until it made $\Pi(E(n),L)$ edge traversals of RV-ASYNCH-POLY$(L)$. More precisely, procedure RV-ASYNCH-POLY$(L)$ was interrupted either at node $v$ just after $a$ completed the first $k$ edge traversals of the procedure or on edge $u-v$ while $a$ was walking from $u$ to $v$, executing the $k$-th edge traversal of the procedure. In the first case, agent $a$ resumes the execution of procedure RV-ASYNCH-POLY$(L)$ from node $v$ by making the $(k+1)$-th edge traversal, the $(k+2)$-th edge traversal, etc., until the $\Pi(E(n),L)$-th edge traversal. In the second case, the agent does the same but by resuming the procedure from $u$ by executing the $k$-th edge traversal of RV-ASYNCH-POLY$(L)$ (instead of starting from node $v$ with the $(k+1)$-th edge traversal): to do so, the agent first moves from node $v$, where it is currently located, to node $u$.}

{Whenever $Min(W)<L$, agent $a$ aborts Phase~2 as soon as it is at a node and switches to Phase~3.}

\noindent  \underline{Phase 3}.

{Let $s$ be the node where agent $a$ is located at the beginning of Phase~3.
If $Min(W)<L$ then agent $a$ seeks to meet its token (i.e. agent $b$) by applying $R(E(n),s)$. Once the meeting occurs, if agent $b$ has already output its bag then agent $a$ does the same. Otherwise, agent $a$ transits to state {\em ghost}.}

{If $Min(W)=L$, we know that agent $a$ has carried out the execution of Phase~2 until its term without aborting it prematurely. We will show that
at this point  there does not remain any agent that is either dormant or in state {\em traveller}. The labels of all remaining agents are in the union of bags of all agents currently in state {\em ghost}. In this case, agent $a$  performs $R(E(n),s)$ followed by a complete backtrack $\overline{R(E(n),s)}$.
After the first trajectory $R(E(n),s)$ the agent has in its bag the labels of all participating agents.
During the second trajectory $\overline{R(E(n),s)}$, all these labels are transmitted to all agents in state {\em ghost}, together with the information that this is the set of all labels. After completing the second trajectory
agent $a$ outputs the value of its bag.}

\begin{theorem}\label{sgl} 
Upon completion of Algorithm SGL, each agent outputs the set of labels of all participating agents. The total cost of the algorithm is polynomial 
in the size of the graph 
and in the smallest length of all labels of participating agents.
\end{theorem}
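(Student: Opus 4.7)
My plan splits the argument into three correctness claims followed by a cost bound, and I will try to reduce each claim to a direct application of Theorem~\ref{main} combined with the integral-trajectory property of the Reingold-based trajectories used in RV-asynch-poly.

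First, I would argue that every agent eventually leaves the traveller state. For any two agents currently executing RV-ASYNCH-POLY$(L+1)$ with distinct labels, Theorem~\ref{main} forces a meeting within $\Pi(n,l^{*}+O(1))$ traversals by one of them, where $l^{*}$ is the smallest label length in the team; this first meeting creates at least one token. Once a token exists, any remaining traveller, whose trajectory eventually enters an iteration index $k\ge n$ and hence becomes integral (covering every edge), must visit the token's node and thereby meet it. A dormant agent is analogously swept up as soon as some awake agent first visits its starting node, which by definition constitutes an immediate meeting and triggers a transition.

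Second, for Phase~1 I would verify that the EST subroutine produces a correct map for every explorer. If an explorer $a$ first meets its token $b$ inside an edge $e$, agent $a$ simulates EST on the augmented graph $G'$ obtained by inserting a pseudo-node $w$ inside $e$; the simulation is faithful as long as subsequent meetings with $b$ occur inside $e$, and otherwise $a$ aborts and restarts EST in the real graph with $b$ pinned at the relevant endpoint of $e$. In either case $a$ terminates with a complete map of (a graph equivalent to) the underlying network and hence knows $n$.

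Third, and most importantly, I would show that after $a$ finishes Phase~2, the union of $a$'s bag with the bags of all current tokens equals the set of all labels. During Phase~2 the explorer executes RV-ASYNCH-POLY$(1)$, while every remaining traveller executes RV-ASYNCH-POLY$(L+1)$ with $L+1\ge 2$; the labels are distinct, so Theorem~\ref{main} forces a meeting before $a$ completes its $\Pi(n,1)$ traversals, so no traveller or still-dormant agent is left unmet. Every other agent is then either a token (its label lies in its own bag) or an explorer (whose label was deposited in its token's bag at the moment of its transition, and bags only grow). The first DFS, during which $a$ visits every stationary token, therefore brings the full label set into $a$'s bag, and the second DFS transmits that completed set, together with an awareness flag, to every token; explorers output after their own second DFS, tokens output upon receiving such a transmission. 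The principal difficulty will be the careful case analysis of the first-meeting dynamics (ignored meetings with existing explorers, mid-edge encounters of a traveller with a token, the bag updates performed at every meeting) together with verifying that $\Pi(n,1)$ traversals of Phase~2 truly suffice to visit every node and thereby sweep up every still-dormant agent. For the cost bound, each of the at most $n$ agents contributes $\mathrm{poly}(n,l^{*})$ traversals as a traveller and $\mathrm{poly}(n)$ traversals in Phase~1 ($T(EST(n))$), Phase~2 ($\Pi(n,1)$), and the two DFS sweeps ($O(n^{2})$), yielding a total cost of $\mathrm{poly}(n,l^{*})$.
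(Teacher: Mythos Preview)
Your overall decomposition (travellers eventually transition; EST yields the map; Phase~2 plus two DFS sweeps collect and distribute all labels; per-agent cost is polynomial) matches the paper's proof, and most of the individual steps are sound. There is, however, one genuine gap and one place where your route diverges from the paper's in a way that costs you the clean bound.

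\textbf{The gap: late-created tokens.} In your third step you argue that the explorer $a$'s second DFS ``transmits that completed set, together with an awareness flag, to every token,'' and that tokens then output. But this only reaches the agents that are \emph{already} in state \emph{token} when $a$ performs its second DFS. Nothing in your plan prevents some agent $D$ from still being a traveller throughout both of $a$'s DFS sweeps and only later transitioning to state \emph{token}; such a $D$ would never receive the awareness flag from $a$, and your sketch gives no reason why $D$ ever outputs. The paper closes this gap with an additional observation: whenever an agent $D$ becomes a token at some time $\tau$, the very meeting that creates $D$ simultaneously creates at least one explorer $c$ (with $D$ as its token). Since $D$ is already a token when $c$ later performs its own second DFS, $c$ is guaranteed to meet $D$ then and deliver the awareness flag. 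You need this pairing argument; without it the correctness claim for tokens is incomplete.

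\textbf{A difference in the traveller phase.} For your first claim you argue in two stages: first, some pair of travellers meets by Theorem~\ref{main}, creating a token; second, any remaining traveller eventually has an integral trajectory (once the iteration index reaches $n$) and so visits the token's resting node. This works, but it is looser than the paper's argument, which applies Theorem~\ref{main} directly between an arbitrary traveller $a$ (running RV-ASYNCH-POLY$(L+1)$) and the agent with the globally smallest label $M$: that agent is either still executing RV-ASYNCH-POLY$(M+1)$ or is idle as a token, and in either case $a$ meets an agent in state \emph{traveller} or \emph{token} within $\Pi(n,|M+1|)$ edge traversals. This single invocation yields the per-agent traveller bound $\Pi(n,|M+1|)$ immediately, whereas your integral-trajectory detour requires a separate estimate of how many traversals elapse before the iteration index reaches $n$. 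Also note a small imprecision: a token may still be completing its last edge traversal when you look for it, so ``visits every stationary token'' should be ``traverses every edge and hence meets every token confined to a single edge''---which is exactly why the DFS sweeps suffice.
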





\begin{proof}
{Let $m$ be the agent having the smallest label, denoted $M$, among all the participating agents. The argument is split in proofs of two claims.}

\vspace*{0.3cm}
\noindent
{{\bf Claim~1.} By applying Algorithm SGL in a graph of size $n$, every agent makes a number of edge traversals polynomial in $n$ and $|M|$.}

{To prove this claim, consider an agent $a$ with label $L$ {(label $L$ can be any label among those that are carried by the agents circulating in the graph: In particular, if $L=M$, agent $a$ corresponds to agent $m$)}. If agent $a$ never wakes up, it makes no edge traversals. So, let us focus on the case where it eventually wakes up. Upon waking up the agent is in state {\em traveller} and starts executing  procedure RV-ASYNCH-POLY$(L)$. In view of Theorem \ref{main},  
by the time the agent performs $\Pi(n,|M|)$ edge traversals, it must meet some agent that is in state {\em traveller} or in state {\em ghost}, or some agent with a bag containing a label smaller than $L$ (if $L\ne M$).} ({Indeed, in the case $L=M$ note that during this time interval if agent $a$ (which corresponds to agent $m$ in this case) does not meet any agent in state {\em traveller}, it must meet an agent in state {\em ghost} because there is an agent in state ghost located in each edge at which an agent transited from state {\em traveller} to state {\em explorer}. In the case $L\ne M$ (i.e., when agent $a$ and agent $m$ are different), note that} during this time interval, agent $m$ is idle or is executing procedure RV-ASYNCH-POLY$(M)$ as a traveller, or an agent in state {\em ghost} playing the role of the token of $m$ is located in the edge at which $m$ stopped the execution of RV-ASYNCH-POLY$(M)$ to transit to state {\em explorer}. So if agent $a$ has not met another agent in state {\em traveller} or in state {\em ghost} or some agent with a bag containing a label smaller than $L$ before it makes the $\Pi(n,|M|)$th edge traversal of RV-ASYNCH-POLY$(L)$, it must meet agent $m$ or the token of agent $m$ while making the $\Pi(n,|M|)$th edge traversal of RV-ASYNCH-POLY$(L)$.) At this meeting, agent $a$ transits either to state
{\em ghost} or to state {\em explorer}. In the first case agent $a$ does not perform any further edge traversals and the claim follows in that case. So consider the second case, when agent $a$ transited to state {\em explorer}.
In this case  agent $a$  uses at most 
{$T(ESST(n))$ edge traversals in Phase 1 in order to perform procedure $ESST$}.
After completing Phase 1 agent $a$ knows a polynomial upper bound $E(n)$ on the size of the graph. 

{In state {\em explorer} agent $a$ starts Phase 2 by executing a complete backtrack of the trajectory made by the agent in Phase~1 which also costs at most 
{$T(ESST(n))$} edge traversals. Then after at most one extra edge traversal, agent $a$ resumes the execution of procedure RV-ASYNCH-POLY$(L)$ from where it interrupted it (when leaving state {\em traveller}) until it made the $\Pi(E(n),|L|)$th edge traversal of RV-ASYNCH-POLY$(L)$ or as soon as $Min(W)<L$, where $W$ is the bag of agent $a$. However, notice that if $a\ne m$ then agent $a$ cannot go beyond the execution of the $\Pi(n,|M|)$th edge traversal of RV-ASYNCH-POLY$(L)$ (and thus every explorer different from $m$ aborts Phase~2 having a bag with a value smaller than its own label). Indeed, if at the time when procedure RV-ASYNCH-POLY$(L)$ is resumed we  have $Min(W)=L$, then the token of $a$ has not met agent $m$ executing RV-ASYNCH-POLY$(L)$ as a traveller. Hence according to Algorithm $SGL$ and Theorem \ref{main}, if agent $a$ does not meet an agent with a bag containing a value smaller than $L$ before the execution of the $\Pi(n,|M|)$th edge traversal of RV-ASYNCH-POLY$(L)$ then it meets either agent $m$ or the token of $m$ while executing the $\Pi(n,|M|)$th edge traversal of 
RV-ASYNCH-POLY$(L)$ which immediately makes $Min(W)$ smaller than $L$.}

{Thus Phase~2 costs at most {$T(ESST(n))+1+\Pi(n,|M|)$} edge traversals if agent $a$ is different from $m$, and at most {$T(ESST(n))+1+\Pi(E(n),|M|)$} for agent $m$, which leads to an upper bound of {$T(ESST(n))+1+\Pi(E(n),|M|)$ for any agent}.}

{Since in Phase~3, an explorer executes at most $2P(E(n))$ edge traversals, the total number of edge traversals performed by any agent can be upper-bounded by  {$\Pi(n,|M|)+2T(ESST(n))+1+\Pi(E(n),|M|)+2P(E(n))$} which is polynomial in $n$ and $|M|$. Hence the claim is proven.}

\vspace*{0.3cm}

\noindent
{{\bf Claim~2.} By applying Algorithm SGL in a graph of size $n$, every agent eventually outputs its bag. Moreover, when a bag is output, it contains the labels of all the participating agents.}

{To prove the claim, first note that only agent $m$ ends up executing $R(E(n),s)\overline{R(E(n),s)}$ for some node $s$ in Phase~3 of state {\em explorer}. Indeed, a necessary condition, for agent with label $L$ to execute this, is that its bag does not contain a label smaller than $L$. However, as mentioned in the proof of Claim~1, at the end of Phase~2 every explorer different from $m$ has its bag containing a value smaller than its own label. Moreover, from Algorithm $SGL$ we know that an agent, say $e$, transits to state {\em explorer} by the time when the first woken up agent leaves state {\em traveller}, and thus in view of Theorem~\ref{theo:est}, agent $m$ is woken up by the time agent $e$ finishes executing Phase~1 of state {\em explorer}. Finally, agent $m$ never transits to state {\em ghost} and eventually executes $R(E(n),s)\overline{R(E(n),s)}$ in Phase~3 of state {\em explorer} because its bag cannot include a label smaller than $M$.}

{Suppose that at the end of the execution of Phase~2 by agent $m$ at time $t$, there remains an agent $x$ that is either still dormant or in state {\em traveller}. In particular this means that agent $m$ does not meet agent $x$ during the execution of the $\Pi(E(n),|M|)$ edge traversals of RV-ASYNCH-POLY$(L)$, first as a {\em traveller} and then as an {\em explorer}. From Theorem~\ref{main} and Algorithm $SGL$ it follows that the token of $m$ necessarily meets agent $x$ by time $t$. However, by meeting the token of $m$, which has in its bag label $M$, agent $x$ must transit to state {\em ghost} by time $t$ according to Algorithm $SGL$, which is a contradiction. Hence at time $t$ all the agents different from $m$ are in state {\em ghost} or {\em explorer} and all the labels of participating agents are in the union of the bags of agents in state {\em ghost}.}

{After round $t$, according to Phase~3 of state {\em explorer}, agent $m$ performs $R(E(n),s)\overline{R(E(n),s)}$. By the end of $R(E(n),s)$, agent $m$ must meet all agents that
were in state {\em ghost} at time $t$, as these agents never enter a different edge from the one where they transited to state {\em ghost}. Consequently, by the end of $R(E(n),s)$, the bag of agent $m$
contains the labels of all agents, and $m$ is aware of this fact. 
During the execution of $\overline{R(E(n),s)}$, agent $m$ transmits its bag to all agents currently in state {\em ghost} together with the information
that this bag contains all labels. 
This permits all agents currently in state {\em ghost} to output the value of their bag which now contains all labels. 
Upon completion of Phase~3, agent
$m$ outputs the value of its bag which contains all labels.}

{To conclude the proof of this claim, we have to argue that each agent (different from $m$) that is in state {\em explorer} at time $t$, also eventually outputs its bag with the labels of all participating agents. This is the case because, as mentioned before, these agents end up transiting to state {\em ghost} by executing Phase~3 of state {\em explorer}. Indeed, if such an agent transits to state {\em ghost} by the end of the execution of Phase~3 by agent $m$, it will get the information that its bag contains all the labels and can be output, either from agent $m$ or from its token, when transiting to state {\em ghost}. Otherwise, it transits to state {\em ghost} after the end of the execution of Phase~3 by agent $m$, and thus it gets this final information from its token when transiting from state {\em explorer} to state {\em ghost}, which proves the claim.}

{The theorem follows from Claims~1 and~2.}
\end{proof}

\section{Conclusion}

We presented an algorithm for asynchronous rendezvous of agents in arbitrary finite connected graphs, working at cost polynomial in the size  of the graph and in the length
of the smaller label. In \cite{CLP}, where the exponential-cost solution was first proposed, the authors stated the following question:
\begin{quotation}
Does there exist a deterministic asynchronous rendezvous algorithm,
working for all connected finite unknown graphs,
with complexity polynomial in the labels of the agents and in the
size of the graph?
\end{quotation}
Our result gives a strong positive answer to this problem: our algorithm is polynomial in the {\em logarithm} of the smaller label and in the size of the graph.

In this paper we did not make any attempt at optimizing the cost of our rendezvous algorithm, the only concern was to keep it polynomial. Cost optimization seems to be a very challenging problem. Even finding the optimal cost of exploration of unknown graphs of known size is still open, and this is a much simpler problem, as it is equivalent to rendezvous of two agents one of which is inert. 

We also applied our rendezvous algorithm to solve four fundamental distributed problems in the context of multiple asynchronous mobile agents.
The cost of all solutions is polynomial in the size of the graph and in the length of the smallest of all labels.


\bibliographystyle{plain}

\begin{thebibliography}{99}

\bibitem{ABDKPR}
H. Attiya, A. Bar-Noy, D. Dolev, D. Koller, D. Peleg and R. Reischuk,
Renaming in an asynchronous environment, Journal of the ACM 37 (1990), 524-548.

\bibitem{AP06}
N. Agmon and D. Peleg, 
Fault-tolerant gathering algorithms for autonomous mobile robots,  
SIAM J. Comput. 36 (2006), 56-82. 

\bibitem{alpern95a}
S. Alpern,
The rendezvous search problem,
SIAM J. on Control and Optimization 33 (1995), 673-683.


\bibitem{alpern02a}
S. Alpern,
Rendezvous search on labelled networks,
Naval Research Logistics 49 (2002), 256-274.

\bibitem{alpern02b}
S. Alpern and S. Gal,
The theory of search games and rendezvous.
Int. Series in Operations research and Management Science,
Kluwer Academic Publisher, 2002.

\bibitem{alpern99}
J. Alpern, V. Baston, and S. Essegaier,
Rendezvous search on a graph,
Journal of Applied Probability 36 (1999), 223-231.



\bibitem{anderson90}
E. Anderson and R. Weber,
The rendezvous problem on discrete locations,
Journal of Applied Probability 28 (1990), 839-851.

\bibitem{anderson98a}
E. Anderson and S. Fekete,
Asymmetric rendezvous on the plane,
Proc. 14th Annual ACM Symp. on Computational Geometry (1998), 365-373.

\bibitem{anderson98b}
E. Anderson and S. Fekete,
Two-dimensional rendezvous search,
Operations Research 49 (2001), 107-118.




\bibitem{BCGIL}
E. Bampas, J. Czyzowicz, L. Gasieniec, D. Ilcinkas, A. Labourel, Almost optimal asynchronous rendezvous in infinite multidimensional grids,
Proc. 24th International Symposium on Distributed Computing (DISC 2010),  297-311.

\bibitem{baston01}
V. Baston and S. Gal,
Rendezvous search when marks are left at the starting
points,
Naval Research Logistics 48 (2001), 722-731.

{\bibitem{CFPS}
M. Cieliebak, P. Flocchini, G. Prencipe, N. Santoro, 
Distributed Computing by Mobile Robots: Gathering,
SIAM J. Comput. 41 (2012),  829-879.}

\bibitem{CP05}
R. Cohen and D. Peleg, 
Convergence properties of the gravitational algorithm in asynchronous robot 
systems, SIAM J. Comput. 34 (2005), 1516-1528. 

\bibitem{CP08}
R. Cohen and D. Peleg, 
Convergence of autonomous mobile robots with inaccurate sensors and movements, 
SIAM J. Comput. 38 (2008), 276-302. 



\bibitem{CCGL}
A. Collins, J. Czyzowicz, L. Gasieniec, A. Labourel,
Tell me where I am so I can meet you sooner.
Proc. 37th International Colloquium on Automata, Languages and Programming (ICALP 2010), 502-514.

\bibitem{CKP}
J. Czyzowicz, A. Kosowski and A. Pelc,
How to meet when you forget:  Log-space rendezvous in arbitrary graphs,
Distributed Computing 25 (2012), 165-178.

 \bibitem{CLP}
 J. Czyzowicz, A. Labourel, A. Pelc, How to meet asynchronously (almost) everywhere,
ACM Transactions on Algorithms 8 (2012), article 37. 
 
 \bibitem{DGKKPV}
G. De Marco, L. Gargano, E. Kranakis, D. Krizanc, A. Pelc, U. Vaccaro,
 Asynchronous deterministic rendezvous in graphs, 
Theoretical Computer Science 355 (2006), 315-326.
 
 \bibitem{DFKP}
A. Dessmark, P. Fraigniaud, D. Kowalski, A. Pelc.
Deterministic rendezvous in graphs.
Algorithmica 46 (2006), 69-96.




\bibitem{DiPe}
Y. Dieudonn\'{e}, A. Pelc, Anonymous Meeting in Networks, Proc. 24rd Annual ACM-SIAM Symposium on Discrete Algorithms (SODA 2013),  737-747.

{\bibitem{DPP}
Y. Dieudonn\'{e}, A. Pelc, D. Peleg, Gathering despite mischief, ACM Transactions on Algorithms 11 (2014), article 1.}

\bibitem{DP}
Y. Dieudonn\'{e}, A. Pelc,
Deterministic network exploration by a single agent with Byzantine tokens,
Information Processing Letters 112 (2012), 467-470

{\bibitem{fpsw}
P. Flocchini, G. Prencipe, N. Santoro, P. Widmayer,
Gathering of asynchronous oblivious robots with limited visibility,
Theor. Comput. Sci. 337 (2005), 147-168.}

\bibitem{FL}
P. Fraigniaud, E. Lazard, Methods and problems of communication in usual networks. Discrete Applied Mathematics 53 (1994), 79-133.


\bibitem{FP2}
P. Fraigniaud, A. Pelc, Delays induce an exponential memory gap for rendezvous in trees, ACM Transactions on Algorithms 9 (2013), article 17. 
\bibitem{FP3}
P. Fraigniaud, A. Pelc, Decidability classes for mobile agents computing, Proc. 10th Latin American Theoretical Informatics Symposium (LATIN 2012), LNCS 7256, 362-374. 





\bibitem{GP}
S. Guilbault, A. Pelc, Asynchronous rendezvous of anonymous agents in arbitrary graphs,
Proc. 15th International Conference on Principles of Distributed Systems (OPODIS 2011), 162-173.

{\bibitem{KM}
D. Kowalski, A. Malinowski,
How to meet in anonymous network,
Theor. Comput. Sci. 399 (2008), 141-156.}

{\bibitem{KKPM11}
E. Kranakis, D. Krizanc, and P. Morin, 
Randomized rendez-vous with limited memory,
ACM Transactions on Algorithms 7 (2011), article 34}.


\bibitem{KKSS}
E. Kranakis, D. Krizanc, N. Santoro and C. Sawchuk, 
Mobile agent rendezvous in a ring, 
Proc. 23rd Int. Conference on Distributed Computing Systems
(ICDCS 2003), IEEE, 592-599.

\bibitem{lim96}
W. Lim and S. Alpern,
Minimax rendezvous on the line,
SIAM J. on Control and Optimization 34 (1996), 1650-1665.

\bibitem{Pe}
A. Pelc, Deterministic rendezvous in networks: A comprehensive survey, Networks 59 (2012), 331-347. 





\bibitem{Ly}
N.L. Lynch, Distributed algorithms, Morgan Kaufmann Publ. Inc.,
San Francisco, USA, 1996.

\bibitem{Re}
O. Reingold, Undirected connectivity in log-space, Journal of the ACM 55 (2008).



{\bibitem{TSZ14}
A. Ta-Shma and U. Zwick.
Deterministic rendezvous, treasure hunts and strongly universal exploration sequences, 
ACM Transactions on Algorithms 10 (2014), article 12.}

\bibitem{YY}
X. Yu and M. Yung, 
Agent rendezvous: a dynamic symmetry-breaking problem, 
Proc.  International Colloquium on Automata,
Languages, and Programming (ICALP 1996), 610-621.





\end{thebibliography}


\end{document}